\DeclareSymbolFont{AMSb}{U}{msb}{m}{n}
\DeclareMathSymbol{\N}{\mathbin}{AMSb}{"4E}
\DeclareMathSymbol{\Z}{\mathbin}{AMSb}{"5A}
\DeclareMathSymbol{\R}{\mathbin}{AMSb}{"52}
\DeclareMathSymbol{\Q}{\mathbin}{AMSb}{"51}
\DeclareMathSymbol{\erert}{\mathbin}{AMSb}{"50}
\DeclareMathSymbol{\I}{\mathbin}{AMSb}{"49}
\DeclareMathSymbol{\C}{\mathbin}{AMSb}{"43}
\definecolor{gray}{gray}{0.4}
\newcommand{\remove}[1]{}
\newtheorem{theorem}{Theorem}[section]
\newtheorem{lemma}[theorem]{Lemma}
\newtheorem{definition}[theorem]{Definition}
\newcommand{\1}{\mathbbm{1}}
\newcommand{\AAA}{\mathcal A}
\newcommand{\DDD}{\mathcal D}
\newcommand{\eps}{\varepsilon}
\newcommand{\Lap}{\operatorname{\rm Lap}}
\def\E{\operatorname*{\mathbb{E}}}
\def\Q{\operatorname*{\mathbb{Q}}}
\def\Lap{\mathop{\rm{Lap}}\nolimits}
\newcommand{\thickhline}{%
    \noalign {\ifnum 0=`}\fi \hrule height 1pt
    \futurelet \reserved@a \@xhline
}
\newcolumntype{"}{@{\hskip\tabcolsep\vrule width 1pt\hskip\tabcolsep}}
\newlength{\fboxhsep}
\newlength{\fboxvsep}
\newlength{\fboxtoprule}
\newlength{\fboxbottomrule}
\newlength{\fboxleftrule}
\newlength{\fboxrightrule}
\def\@frameb@xother#1{%
  \@tempdima\fboxtoprule
  \advance\@tempdima\fboxvsep
  \advance\@tempdima\dp\@tempboxa
  \hbox{%
    \lower\@tempdima\hbox{%
      \vbox{%
        \hrule\@height\fboxtoprule
        \hbox{%
          \vrule\@width\fboxleftrule
          #1%
          \vbox{%
            \vskip\fboxvsep
            \box\@tempboxa
            \vskip\fboxvsep}%
          #1%
          \vrule\@width\fboxrightrule}%
        \hrule\@height\fboxbottomrule}%
    }%
  }%
}
\long\def\fboxother#1{%
  \leavevmode
  \setbox\@tempboxa\hbox{%
    \color@begingroup
    \kern\fboxhsep{#1}\kern\fboxhsep
    \color@endgroup}%
  \@frameb@xother\relax}
\begin{document}

\begin{titlepage}

\title{Adversarially Robust Streaming Algorithms via Differential Privacy}
\author{
Avinatan Hassidim\thanks{Bar-Ilan University and Google.}
\and
Haim Kaplan\thanks{Tel Aviv University and Google.}
\and
Yishay Mansour\footnotemark[2]
\and
Yossi Matias\thanks{Google.}
\and
Uri Stemmer\thanks{Ben-Gurion University and Google.}
}

\date{April 13, 2020}
\maketitle
\setcounter{page}{0} \thispagestyle{empty}

\begin{abstract}
A streaming algorithm is said to be {\em adversarially robust} if its accuracy guarantees are maintained even when the data stream is chosen 
maliciously, by an {\em adaptive adversary}. We establish a connection between adversarial robustness of streaming algorithms and the notion of {\em differential privacy}. This connection allows us to design new adversarially robust streaming algorithms that outperform the current state-of-the-art constructions for many interesting regimes of parameters.
\end{abstract}

\end{titlepage}

\section{Introduction}

The field of {\em streaming algorithms} was formalized by Alon, Matias, and Szegedy~\cite{AlonMS99}, and has generated a large body of work that intersects many other fields in computer science such as theory, databases, networking, and natural language processing. 
Consider a scenario in which data items are being generated one by one, e.g., IP traffic monitoring or web searches. 
Generally speaking, streaming algorithms aim to process such data streams while using only a limited amount of memory, significantly smaller than what is needed to store the entire data stream.\footnote{We remark, however, that streaming algorithms are also useful in the offline world, for example in order to process a large unstructured database that is located on an external storage.} 
Typical streaming problems include estimating frequency moments, counting the number of distinct elements in the stream, identifying heavy-hitters in the stream, estimating the median of the stream, and much more~\cite{FM85,CCF02,ByJKST02,MankuM02,DGIM02,CM05,CM05b,IW05,Muthukrishnan05,ElkinZ06,KNW10,Nelson11}.

Usually, streaming algorithms can be queried a lot of times throughout the execution. The reason is that (usually) the space requirement of streaming algorithms scales as $\log(1/\delta)$, where $\delta$ is the failure probability of the algorithm. By a union bound, this means that in order to guarantee accuracy for $m$ queries (with probability $1-\delta$) the space only scales proportionally to $\log(m/\delta)$, so we can tolerate quite a few queries without blowing up space. However, for this argument to go through, we need to assume that the entire stream is {\em fixed} in advanced (and is just given to us one item at a time), or at least that the choice of the items in the stream is {\em independent} of the internal state (and coin tosses) of our algorithm. This setting is sometimes referred to as the {\em oblivious} setting. The vast majority of the work on streaming algorithms is focused on the oblivious setting.

Now suppose that the items in the stream, as well as the queries issued to the algorithm, are chosen by an {\em adaptive (stateful) adversary}. Specifically, every item in the stream (and each of the queries) is chosen by the adversary as a function of the previous items in the stream, the previous queries, and the previous answers given by our streaming algorithm. As a result, the items in the stream are {\em no longer independent} of the internal state of our algorithm. Oblivious streaming algorithms fail to provide meaningful utility guarantees in such a situation. 
In this work we aim to design {\em adversarially robust streaming algorithms} that maintain (provable) accuracy against such adaptive adversaries, while of course keeping the memory and runtime requirements to a minimum.
We stress that such dependencies between the items in the stream and the internal state of the algorithm may occur unintentionally (even when there is no ``adversary''). For example, consider a large system in which a streaming algorithm is used to analyze data coming from one part of the system while answering queries generated by another part of the system, but these (supposedly) different parts of the system are connected via a feedback loop. In such a case, it is no longer true that the items in the stream are generated independently of the previous answers, and the vast majority of the existing streaming algorithms would fail to provide meaningful utility guarantees.

Recall that (typically) in the {\em oblivious setting} the memory requirement only grows logarithmically with the number $m$ of queries that we want to support. For the {\em adaptive setting}, one can easily show that a memory blowup of $\tilde{O}(m)$ suffices. This can be achieved, e.g.,\ by running $m$ independent copies of the algorithm (where we feed the input stream to each of the copies) and using each copy in order to answer at most one query. Can we do better?

This question has motivated a recent line of work that is focused on constructing {\em adversarially robust streaming algorithms}~\cite{MironovNS11,GHRSW12,GHSWW12,AhnGM12,AhnGM12b,HardtW13,BenEliezerY19,BenEliezerJWY20}. The formal model we consider was recently put forward by Ben-Eliezer et al.~\cite{BenEliezerJWY20}, who presented adversarially robust streaming algorithms for many problems in the {\em insertion-only model} (i.e., when the stream contains only {\em positive} updates). Moreover, their results extend to {\em turnstile streams} (where both {\em positive} and {\em negative} updates are allowed), provided that the number of negative updates is small. The question remained largely open for the general turnstile model where there might be a large number of negative updates.

\subsection{Existing Results}
We now give an informal overview of the techniques of~\cite{BenEliezerJWY20}. This intuitive overview is generally oversimplified, and hides
many of the difficulties that arise in the actual analysis. See~\cite{BenEliezerJWY20} for the formal details and for additional results.

Consider a stream of updates $(a_1,\Delta_1),\dots,(a_m,\Delta_m)$, where $a_i\in[n]$ is the $i$th element and $\Delta_i\in\Z$ is its weight. For $i\in[m]$ we write $\vec{a}_i=((a_1,\Delta_1),\dots,(a_i,\Delta_i))$ to denote the first $i$ elements of the stream. 
Let $g:([n]\times\Z)^*\rightarrow\R$ be a function (for example, $g$ might count the number of distinct elements in the stream).
At every time step $i$, after obtaining the next element in the stream $(a_i,\Delta_i)$, our goal is to output an approximation for $g(\vec{a}_i)$.

Ben-Eliezer et al.~\cite{BenEliezerJWY20} focused on the case where all of the weights $\Delta_i$ are {\em positive} (this assumption is known as the {\em insertion-only model}).
To illustrate the results of~\cite{BenEliezerJWY20}, let us consider the the {\em distinct elements} problem, in which the function $g$ counts the number of distinct elements in the stream. Specifically, after every update $(a_i,\Delta_i)$ we need to output an estimation of $g(\vec{a}_i)=|\{a_j : j\in[i] \}|$. Observe that, in the insertion-only model, this quantity is monotonically increasing. Furthermore, since we are aiming for a multiplicative $(1\pm\alpha)$ error, even though the stream is large (of length $m$), the number of times we actually need to modify the estimates we release is quite small (roughly $\frac{1}{\alpha}\log m$ times). Informally, the idea of~\cite{BenEliezerJWY20} is to run several independent sketches in parallel, and to use each sketch to release answers over a part of the stream during which the estimate remains constant. In more detail, the generic transformation of~\cite{BenEliezerJWY20} (applicable not only to the distinct elements problem) is as based on the following definition. 

\begin{definition}[Flip number \cite{BenEliezerJWY20}]
Given a function $g$, the {\em $(\alpha,m)$-flip number} of $g$, denoted as $\lambda_{\alpha,m}(g)$, is the maximal number of times that the value of $g$ can change (increase or decrease) by a factor of $(1+\alpha)$ during a stream of length $m$.
\end{definition}

The generic construction of~\cite{BenEliezerJWY20} for a function $g$ is as follows. 
\begin{enumerate}
	\item Instantiate $\lambda\geq\lambda_{\alpha,m}(g)$ independent copies of an oblivious streaming algorithm for the function $g$, and set $j=1$.
	\item When the next update $(a_i,\Delta_i)$ arrives:
	\begin{enumerate}
		\item Feed $(a_i,\Delta_i)$ to {\em all} of the $\lambda$ copies.
		\item Release an estimate using the $j$th copy (rounded to the nearest power of $(1+\alpha)$). If this estimate is different than the previous estimate, then set $j\leftarrow j+1$.
	\end{enumerate}
\end{enumerate}
Ben-Eliezer et al.~\cite{BenEliezerJWY20} showed that this can be used to transform an oblivious streaming algorithm for $g$ into an adversarially robust streaming algorithm for $g$. %
In addition, the overhead in terms of memory is only $\lambda_{\alpha,m}(g)$, which is typically small in the insertion-only model (typically $\lambda_{\alpha,m}(g)\lesssim \frac{1}{\alpha}\log m$). Moreover, \cite{BenEliezerJWY20} showed that their techniques extend to the {\em turnstile model} (when the stream might contain updates with negative weights), provided that the number of negative updates is small (and so $\lambda_{\alpha,m}(g)$ remains small). 

\begin{theorem}[\cite{BenEliezerJWY20}, informal]
Fix any function $g$ and let $\AAA$ be an oblivious streaming algorithm for $g$ that for any $\alpha,\delta>0$ uses space $L(\alpha,\delta)$ and guarantees accuracy $\alpha$ with success probability $1-\delta$ for streams of length $m$. Then there exists an adversarially robust streaming algorithm for $g$ that guarantees accuracy $\alpha$ with success probability $1-\delta$ for streams of length $m$ using space
$$
O\left(  L\left(\frac{\alpha}{10},\delta\right) \cdot \lambda_{\frac{\alpha}{10},m}(g)  \right).
$$
\end{theorem}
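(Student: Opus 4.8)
The plan is to analyze the generic construction displayed above, run with its accuracy parameter set to $\alpha/10$, so that it maintains $\lambda=\lambda_{\alpha/10,m}(g)$ independent copies of $\AAA$, each instantiated with accuracy $\alpha/10$ and failure probability $\delta/\lambda$ (the informal statement absorbs the resulting $\log\lambda$ factor, and other off-by-one and constant slack, into the $O(\cdot)$; we also read $\AAA$'s guarantee as a tracking guarantee, i.e.\ holding simultaneously at all steps). The goal is to establish, on a single $(1-\delta)$-probability event, two facts: (i) every copy that is ever used to answer is accurate on every prefix on which it is queried; and (ii) as a deterministic consequence of (i), all released answers are within a $(1\pm\alpha)$ factor of $g(\vec a_i)$ and the number of ``switches'' stays below the number of available copies, so the $j$-pointer never runs past $\lambda$.

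The crux, and the place where one must be careful, is a coupling argument showing that each individual copy effectively operates in the oblivious setting even though the stream is adaptive. Fix $j\in[\lambda]$ and condition on the randomness of the adversary and of every copy other than $j$. Under this conditioning, whether copy $j$ is ever activated, the time $t_j$ at which it becomes the active copy, and the value $v$ being released at time $t_j-1$ are all determined \emph{without reference to copy $j$'s internal randomness}, since copies $1,\dots,j-1$ are responsible for every answer released before time $t_j$. Now compare the real execution with a ``frozen'' execution that behaves identically except that, from time $t_j$ on, it keeps feeding updates to copy $j$ but never consults its output and simply re-releases $v$ forever. In the frozen execution the entire stream $\vec a_m$ is a deterministic function of the (already fixed) adversary randomness, hence independent of copy $j$'s coins; so by $\AAA$'s oblivious guarantee, copy $j$ is accurate on all prefixes of $\vec a_m$ with probability at least $1-\delta/\lambda$. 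Finally, run the two executions on the same coins and observe that they agree --- on the updates, and on the internal state of copy $j$ --- up to and including the first step $t$ at which copy $j$'s rounded estimate differs from $v$, i.e.\ the step at which copy $j$ would be retired in the real execution: before $t$ the real execution also releases exactly $v$ (that is the switching rule), so the adversary issues the same updates in both; and the update at step $t$ is chosen before the (now differing) answer at step $t$ is observed, so it agrees as well. Consequently copy $j$ is queried in the real execution only on prefixes that also occur in the frozen stream, and it is therefore accurate on all of them with probability at least $1-\delta/\lambda$. A union bound over $j\in[\lambda]$ yields the event in (i). I expect this coupling --- in particular the point that copy $j$'s output is never ``exposed'' to the adversary until the very step at which copy $j$ is discarded --- to be the main obstacle.

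On the good event the rest is bookkeeping. Whenever copy $j$ is active but not being retired, the algorithm outputs the inherited value $v$, which by the switching rule is the rounding of copy $j$'s current estimate; since that estimate is within $(1\pm\alpha/10)$ of the truth and the rounding adds at most another $(1+\alpha/10)$ factor, the output lies within $(1\pm\alpha)$ of $g(\vec a_i)$, and at the retirement step the algorithm outputs copy $j$'s own fresh rounded estimate, which is accurate for the same reason. It remains to bound the number of retirements by $\lambda$: the released answers form a rounded, $(1\pm\alpha/10)$-accurate track of the successive values $g(\vec a_i)$, and one checks --- by choosing the rounding granularity appropriately and, in the insertion-only regime where $g$ is monotone, by additionally releasing a running extremum so that the output sequence is itself monotone --- that such a sequence can change at most $\lambda_{\alpha/10,m}(g)=\lambda$ times. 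Hence the pointer never exceeds $\lambda$, the algorithm is well defined, and it is $\alpha$-accurate at every step with probability $1-\delta$, using the space of $\lambda$ copies of $\AAA$ with parameters $(\alpha/10,\delta/\lambda)$, i.e.\ $O\!\left(L(\alpha/10,\delta)\cdot\lambda_{\alpha/10,m}(g)\right)$ up to the suppressed logarithmic factor.
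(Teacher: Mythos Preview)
The paper does not actually prove this theorem: it is quoted from \cite{BenEliezerJWY20} in the ``Existing Results'' subsection, accompanied only by the informal description of the sketch-switching construction and the sentence ``Ben-Eliezer et al.~\cite{BenEliezerJWY20} showed that this can be used to transform an oblivious streaming algorithm for $g$ into an adversarially robust streaming algorithm for $g$.'' There is thus no proof in the present paper against which to compare.

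That said, your reconstruction is essentially the right argument for the cited result. The coupling you describe --- freeze copy $j$'s output at the inherited value $v$, observe that the resulting stream is determined by the adversary's coins and copies $1,\dots,j{-}1$ alone, and note that the real and frozen executions agree on every update up to and including the step at which copy $j$ is retired because the released answer equals $v$ throughout and the retiring update is chosen before the differing answer is seen --- is exactly the mechanism by which sketch switching avoids leaking copy $j$'s randomness to the adversary while it is in use. Two small points: (i) the construction as written releases copy $j$'s rounded estimate at each step, so your ``inherited $v$'' should be read as the last value released by copy $j{-}1$, and the very first step $t_j$ may already be a retirement step if copy $j$'s rounding disagrees with $v$; your coupling still covers this, but the phrasing ``whenever copy $j$ is active but not being retired, the algorithm outputs the inherited value $v$'' presupposes an agreement at step $t_j$ that need not hold. (ii) Your bookkeeping for the number of switches is where the slack and the monotone-tracking trick (running extremum) of \cite{BenEliezerJWY20} actually live; as you note, the ``informal'' tag is doing work here, absorbing both the rounding-granularity argument and the $\log\lambda$ from $\delta/\lambda$.
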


\subsection{Our Results}
We establish a connection between adversarial robustness of streaming algorithms and {\em differential privacy}, a model to provably guarantee privacy protection when analyzing data. Consider a database containing (sensitive) information pertaining to individuals. An algorithm operating on such a database is said to be {\em differentially private} if its outcome does not reveal information that is specific to any individual in the database. More formally, differential privacy requires that no individual's data has a significant effect on the distribution of the output. Intuitively, this guarantees that whatever is learned about an individual could also be learned with her data arbitrarily modified (or without her data). Formally,

\begin{definition}[\cite{DMNS06}]\label{def:dpIntro}
Let $\AAA$ be a randomized algorithm that operates on databases.
Algorithm $\AAA$ is $(\eps,\delta)$-{\em differentially private} if for any two databases $S,S'$ that differ on one row, and any event $T$, we have 
$$\Pr[\AAA(S)\in T]\leq e^{\eps}\cdot \Pr[\AAA(S')\in T]+\delta.$$ 
\end{definition}

Our main conceptual contribution is to show that the notion of differential privacy can be used {\em as a tool} in order to construct new adversarially robust streaming algorithms. In a nutshell, the idea is to protect the {\em internal state} of the algorithm using {\em differential privacy}. Loosely speaking, this limits (in a precise way) the dependency between the internal state of the algorithm and the choice for the items in the stream, and allows us to analyze the utility guarantees of the algorithm even in the adaptive setting. Notice that differential privacy is {\em not} used here in order to protect the privacy of the data items in the stream. Rather, differential privacy is used here to protect the internal randomness of the algorithm.

For many problems of interest, even in the general turnstile model (with deletions), this technique allows us to obtain adversarially robust streaming algorithms with sublinear space. To the best of our knowledge, our technique is the first to provide meaningful results for the general turnstile model. In addition, for interesting regimes of parameters, our algorithm outperforms the current state-of-the-art constructions also for the insertion-only model (strictly speaking, our results for the insertion-only model are incomparable with~\cite{BenEliezerJWY20}). 

We obtain the following theorem.

\begin{theorem}
Fix any function $g$ and let $\AAA$ be an oblivious streaming algorithm for $g$ that for any $\alpha,\delta>0$ uses space $L(\alpha,\delta)$ and guarantees accuracy $\alpha$ with success probability $1-\delta$ for streams of length $m$. Then there exists an adversarially robust streaming algorithm for $g$ that guarantees accuracy $\alpha$ with success probability $1-\delta$ for streams of length $m$ using space
$$
O\left(  L\left(\frac{\alpha}{10},\frac{1}{10}\right) \cdot \sqrt{ \lambda_{\frac{\alpha}{10},m}(g)\cdot\log\left(\frac{1}{\delta}\right)}\cdot\log\left(\frac{m}{\alpha\delta}\right)  \right).
$$
\end{theorem}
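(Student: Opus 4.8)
The plan is to combine the "run many independent sketches" idea of~\cite{BenEliezerJWY20} with a differentially private aggregation mechanism, so that instead of consuming one fresh sketch per output change, we protect the decision of \emph{when to change the reported estimate} using differential privacy. Concretely, I would instantiate $k$ independent oblivious copies of $\AAA$, each run with accuracy parameter $\alpha/10$ and a \emph{constant} failure probability (say $1/10$) --- this is why the space term is $L(\alpha/10,1/10)$ rather than $L(\alpha/10,\delta)$. At each time step, feed the update to all $k$ copies, collect their current estimates, and use a private mechanism to decide whether the aggregate estimate has moved by a $(1+\alpha)$ factor relative to the last reported value; only when it has do we release a new (rounded) estimate. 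The key point is that the answers the adversary sees are a differentially private function of the internal randomness of the $k$ copies, so by a transfer/generalization argument the adversary cannot steer the stream into a region where too many of the copies are simultaneously wrong.

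The main technical engine is a private threshold/counting primitive: I would use the sparse-vector / private-counter machinery to monitor, over the $m$ time steps, the quantity "current estimate" and flag the $O(\lambda_{\alpha/10,m}(g))$ times it crosses a multiplicative threshold. The number of "hard" queries (those that actually trigger an output change) is bounded by the flip number $\lambda := \lambda_{\alpha/10,m}(g)$, and a private counter over a stream of $m$ steps with $\lambda$ ``above threshold'' events has error scaling like $\sqrt{\lambda}\cdot\polylog(m)\cdot\sqrt{\log(1/\delta)}$ while preserving $(\eps,\delta')$-privacy for a suitable $\eps$; this is exactly the source of the $\sqrt{\lambda_{\alpha/10,m}(g)\cdot\log(1/\delta)}$ factor and the extra $\log(m/(\alpha\delta))$ factor in the final bound. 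Setting $k = \tilde{O}\bigl(\sqrt{\lambda\log(1/\delta)}\bigr)$ copies and choosing the privacy parameter so that the generalization/transfer bound gives each individual reported answer accuracy $\alpha$ with the right probability, then union-bounding over the at most $\lambda$ output changes (and noting unchanged outputs inherit correctness for free), yields overall accuracy $\alpha$ with probability $1-\delta$, and total space $k\cdot L(\alpha/10,1/10)\cdot\polylog = O\bigl(L(\alpha/10,1/10)\cdot\sqrt{\lambda_{\alpha/10,m}(g)\log(1/\delta)}\cdot\log(m/(\alpha\delta))\bigr)$.

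The hard part, and where the real work lies, is the \emph{transfer argument}: showing that because the sequence of released answers is differentially private in the $k$ internal sketches (viewed as the "database"), the adaptively-chosen stream behaves --- with respect to those sketches --- almost as if it were fixed in advance. This requires a careful setup in which each of the $k$ copies is one "row" of a database, arguing that the full interaction transcript (all updates and all answers) is a private function of that database, and then invoking a generalization-type lemma to conclude that the empirical fraction of copies that are accurate on the realized stream is close to its "oblivious" expectation. One must be careful that privacy degrades gracefully over the $m$ rounds (advanced composition, hence the $\sqrt{\log(1/\delta)}$ and the dependence on the number of times we actually touch the private mechanism, which is $\lambda$ not $m$), and that the accuracy of the private counter is good enough that a rounded aggregate of the $k$ estimates is still within $\alpha$ of $g(\vec a_i)$. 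I expect the remaining steps --- choosing constants, the union bound over the $\lambda$ change-points, and bounding space --- to be routine once the privacy-to-robustness transfer and the private-counter accuracy bound are in place.
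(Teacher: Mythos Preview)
Your proposal is essentially the paper's proof: run $k=\tilde{O}\bigl(\sqrt{\lambda\log(1/\delta)}\bigr)$ copies of $\AAA$ with constant failure probability, treat their random strings as the ``database,'' use the sparse-vector technique to privately detect the (at most $\lambda$) times the estimate must change, aggregate via a private mechanism, and invoke the differential-privacy generalization theorem (Theorem~\ref{thm:DPgeneralization}) together with advanced composition to get the transfer from oblivious to adaptive accuracy. The only points where you are slightly looser than the paper are (i) the aggregation step is specifically a \emph{private median} (Theorem~\ref{thm:Pmed}) over the $k$ current estimates, and (ii) ``unchanged outputs inherit correctness for free'' is not quite free---it requires combining the AboveThreshold non-firing (so $\gtrsim 4k/10$ copies agree with $\tilde g$) with the generalization bound (so $\gtrsim 4k/5$ copies are accurate) to find a witness copy satisfying both---but you clearly have the right architecture.
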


Compared to~\cite{BenEliezerJWY20}, our space bound grows only as $\sqrt{\lambda}$ instead of linearly in $\lambda$. 
This means that in the general turnstile model, when $\lambda$ can be large, we obtain a significant improvement at the cost of additional logarithmic factors. In addition, as $\lambda$ typically scales at least linearly with $1/\alpha$, we obtain improved bounds even for the insertion-only model in terms of the dependency of the memory in $1/\alpha$ (again, at the expense of additional logarithmic factors).

\subsection{Other Related Results}
Over the last few years, differential privacy has proven itself to be an important {\em algorithmic} notion (even when data privacy is not of concern), and has found itself useful in many other fields, such as machine learning, mechanism design, secure computation, probability theory, secure storage, and more.~\cite{MT07,DFHPRR14,HU14,SU15,BassilyNSSSU16,NissimSSSU18,StemmerN19,KellarisKNO17,BeimelHMO18}
In particular, our results utilize a connection between {\em differential privacy} and {\em generalization}, which was first discovered by Dwork et al.~\cite{DFHPRR14} in the context of {\em adaptive data analysis}.

\section{Preliminaries}

A stream of length $m$ over a domain $[n]$ consists of a sequence of updates $(a_1,\Delta_1),\dots,(a_m,\Delta_m)$ where $a_i\in[n]$ and $\Delta_i\in\Z$. For $i\in[m]$ we write $\vec{a}_i=((a_1,\Delta_1),\dots,(a_i,\Delta_i))$ to denote the first $i$ elements of the stream. 
Let $g:([n]\times\Z)^*\rightarrow\R$ be a function (for example, $g$ might count the number of distinct elements in the stream).
At every time step $i$, after obtaining the next element in the stream $(a_i,\Delta_i)$, our goal is to output an approximation for $g(\vec{a}_i)$. We assume throughout the paper that $\log(m)=\Theta(\log n)$ and that $g$ is bounded polynomially in $n$.

\subsection{Streaming against adaptive adversary}
The adversarial streaming model, in various forms, was considered by~\cite{MironovNS11,GHRSW12,GHSWW12,AhnGM12,AhnGM12b,HardtW13,BenEliezerY19,BenEliezerJWY20}. We give here the formulation presented by Ben-Eliezer et al.~\cite{BenEliezerJWY20}. 
The adversarial setting is modeled by a two-player game between a (randomized) \texttt{StreamingAlgorithm} and an \texttt{Adversary}. At the beginning, we fix a function $g$. Then the game proceeds in rounds, where in the $i$th round:

\begin{enumerate}
	\item The \texttt{Adversary} chooses an update $u_i=(a_i,\Delta_i)$ for the stream, which can depend, in particular, on all previous stream updates and outputs of \texttt{StreamingAlgorithm}.
	\item The \texttt{StreamingAlgorithm} processes the new update $u_i$ and outputs its current response $z_i$.
\end{enumerate}

The goal of the \texttt{Adversary} is to make the \texttt{StreamingAlgorithm} output an incorrect response
$z_i$ at some point $i$ in the stream. For example, in the distinct elements problem,
the adversary's goal is that at some step $i$, the estimate $z_i$ will fail to be a $(1+\alpha)$-approximation
of the true current number of distinct elements.

We remark that our techniques extend to a model in which the \texttt{StreamingAlgorithm} only needs to release an approximation for $g(\vec{a}_i)$ in at most $w\leq m$ time steps (which are chosen adaptively by the adversary), in exchange for lower space
requirements. For simplicity, we will focus on the case where the \texttt{StreamingAlgorithm} needs to release an approximate answer in every time step.

\subsection{Preliminaries from differential privacy}

\paragraph{The Laplace Mechanism.} 
The most basic constructions of differentially private algorithms are via the Laplace mechanism as follows.

\begin{definition}[The Laplace distribution]
A random variable has probability distribution $\Lap(b)$ if its probability density function is $f(x)=\frac{1}{2b}\exp\left(-\frac{|x|}{b}\right)$, where $x\in\R$.
\end{definition}

\begin{definition}[Sensitivity]
A function $f:X^* \rightarrow \R$ has {\em sensitivity $\ell$} if for every two databases $S,S'\in X^*$ that differ in one row it holds that $|f(S)-f(S')|\leq \ell$.
\end{definition}

\begin{theorem}[The Laplace mechanism \cite{DMNS06}]\label{thm:lap}
Let $f:X^* \rightarrow \R$ be a sensitivity $\ell$ function. The mechanism that on input $S\in X^*$ returns $f(S)+\Lap(\frac{\ell}{\eps})$ preserves $(\eps,0)$-differential privacy.
\end{theorem}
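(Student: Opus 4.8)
The plan is to verify the definition of $(\eps,0)$-differential privacy directly, by establishing a pointwise bound on the ratio of the output densities and then integrating over an arbitrary event. Fix two databases $S,S'\in X^*$ that differ in one row, and write $\MMM(S)=f(S)+\Lap(\ell/\eps)$ for the mechanism's output. Since $\Lap(\ell/\eps)$ has density $x\mapsto \frac{\eps}{2\ell}\exp\left(-\frac{\eps|x|}{\ell}\right)$, the random variable $\MMM(S)$ has density $p_S(t)=\frac{\eps}{2\ell}\exp\left(-\frac{\eps\,|t-f(S)|}{\ell}\right)$ on $\R$, and similarly $p_{S'}(t)=\frac{\eps}{2\ell}\exp\left(-\frac{\eps\,|t-f(S')|}{\ell}\right)$. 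The first step is to bound the ratio of these densities at every point $t\in\R$:
$$
\frac{p_{S}(t)}{p_{S'}(t)}
= \exp\!\left( \frac{\eps}{\ell}\Bigl( |t-f(S')| - |t-f(S)| \Bigr) \right).
$$

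The second (and only substantive) step is the estimate of the exponent. By the reverse triangle inequality, $|t-f(S')| - |t-f(S)| \le |f(S)-f(S')|$, and by the sensitivity assumption $|f(S)-f(S')|\le \ell$, since $S$ and $S'$ differ in one row. Hence the exponent is at most $\frac{\eps}{\ell}\cdot \ell = \eps$, so $p_S(t)\le e^{\eps}\,p_{S'}(t)$ for all $t$. (The symmetric bound $p_{S'}(t)\le e^{\eps}\,p_S(t)$ holds by exchanging the roles of $S$ and $S'$, though only one direction is needed for the definition.)

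The final step is to pass from densities to events. Let $T\subseteq\R$ be any measurable event. Integrating the pointwise bound,
$$
\Pr[\MMM(S)\in T] = \int_T p_S(t)\,dt \le e^{\eps}\int_T p_{S'}(t)\,dt = e^{\eps}\cdot\Pr[\MMM(S')\in T],
$$
which is exactly the $(\eps,0)$-differential privacy guarantee (the additive $\delta$ term is zero). For general (e.g.\ multi-dimensional or discrete) output ranges one replaces the Lebesgue integral by the appropriate base measure, but the paper's statement concerns a real-valued $f$, so the one-dimensional argument suffices.

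\textbf{Main obstacle.} There is essentially no obstacle: the entire content is the reverse-triangle-inequality bound on the exponent in the density ratio, combined with the definition of sensitivity. The only points requiring a line of care are (i) noting that the normalizing constants $\frac{\eps}{2\ell}$ cancel in the ratio, so the bound is uniform in $t$, and (ii) confirming measurability so that the integration step is legitimate — both routine.
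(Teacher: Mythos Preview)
Your proof is correct and is the standard argument for the privacy of the Laplace mechanism. Note, however, that the paper does not actually prove this theorem: it is stated as a preliminary result with a citation to \cite{DMNS06}, so there is no ``paper's own proof'' to compare against. Your write-up matches the original proof in that reference.
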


\paragraph{The sparse vector technique.}
Consider a large number of low-sensitivity functions $f_1,f_2,\ldots$ which are given (one by one) to a data curator (holding a database $S$). Dwork, Naor, Reingold, Rothblum, and Vadhan~\cite{DNRRV09} presented a simple (and elegant) tool 
that can {\em privately} identify the first index $i$ such that the value of $f_i(S)$ is ``large''.

\begin{center}
\noindent\fbox{
\parbox{.95\columnwidth}{
{\bf Algorithm \texttt{AboveThreshold}}\\[1pt]
{\bf Input:} Database $S\in X^*$, privacy parameter $\eps$, threshold $t$, and a stream of sensitivity-1 queries $f_i:X^*\rightarrow\R$.
\begin{enumerate}[topsep=-1pt,rightmargin=5pt,itemsep=-1pt]
\item Let $\hat{t}\leftarrow t+\Lap(\frac{2}{\eps})$.
\item In each round $i$, when receiving a query $f_i$, do the following:
\begin{enumerate}[topsep=-3pt,rightmargin=5pt]
\item Let $\hat{f_i}\leftarrow f_i(S)+\Lap(\frac{4}{\eps})$.
\item If $\hat{f_i}\geq\hat{t}$, then output $\top$ and halt.
\item Otherwise, output $\bot$ and proceed to the next iteration.\\
\end{enumerate}
\end{enumerate}
}}
\end{center}

Notice that the number of possible rounds unbounded. Nevertheless, this process preserves differential privacy:

\begin{theorem}[\cite{DNRRV09,PMW_HR10}]\label{thm:aThresh}
Algorithm \texttt{AboveThreshold} is $(\eps,0)$-differentially private.
\end{theorem}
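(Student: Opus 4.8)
The plan is to fix two neighboring databases $S$ and $S'$ and prove the pointwise bound $\Pr[\mathcal{A}(S)=w]\le e^{\eps}\,\Pr[\mathcal{A}(S')=w]$ for every halting transcript $w=(\bot^{k-1},\top)$ of \texttt{AboveThreshold} (the $\top$ occurring in round $k$). The transcript space --- finite strings over $\{\bot,\top\}$ ending in $\top$, plus the single infinite all-$\bot$ transcript --- is countable, so such pointwise bounds yield $(\eps,0)$-differential privacy for every event $T$ by summing; the all-$\bot$ outcome is handled by a routine truncation argument, i.e.\ running the same analysis on the mechanism restricted to $j$ rounds (where it has finitely many outcomes and the ``never halted'' outcome $\bot^j$ is treated exactly like a halting outcome but with the round-$k$ acceptance test deleted), obtaining $(\eps,0)$-privacy of the $j$-round version for every $j$, and noting that the first $r$ coordinates of $\mathcal{A}$ and of its $r$-round truncation are equidistributed, so the bound passes to the limit.

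So fix $w=(\bot^{k-1},\top)$ and let $\nu_1,\dots,\nu_k\sim\Lap(4/\eps)$ be the query noises and $\hat t=t+\rho$ with $\rho\sim\Lap(2/\eps)$ the noisy threshold. The crucial step is to condition on $\nu_1,\dots,\nu_{k-1}$ and integrate out only $\rho$ and $\nu_k$. Conditioned on these $k-1$ values, the output equals $w$ exactly when rounds $1,\dots,k-1$ all return $\bot$ --- i.e.\ $\hat t>g(S):=\max_{i<k}(f_i(S)+\nu_i)$, read as $g(S)=-\infty$ if $k=1$ --- and round $k$ returns $\top$ --- i.e.\ $f_k(S)+\nu_k\ge\hat t$. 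Writing $\phi$ for the density of $t+\Lap(2/\eps)$ and $F(x)=\Pr_{\nu\sim\Lap(4/\eps)}[\nu\ge x]$ for the Laplace survival function, this conditional probability equals $\int\phi(y)\,\1[y>g(S)]\,F(y-f_k(S))\,dy$.

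To move from $S$ to $S'$ I would substitute $y\mapsto y+c$ with $c:=g(S)-g(S')$. Two observations then yield the factor $e^{\eps}$. First, $g=\max_{i<k}f_i$ is a maximum of sensitivity-$1$ functions and so is itself sensitivity-$1$; hence $|c|\le 1$, the $\Lap(2/\eps)$ density changes by at most $\phi(y+c)\le e^{\eps|c|/2}\phi(y)\le e^{\eps/2}\phi(y)$, and the event $\{y+c>g(S)\}$ becomes exactly $\{y>g(S')\}$. Second, the argument $y+c-f_k(S)$ now fed to $F$ differs from $y-f_k(S')$ by at most $|g(S)-g(S')|+|f_k(S)-f_k(S')|\le 2$, and the survival function of $\Lap(b)$ satisfies $|\tfrac{d}{dx}\log F(x)|\le 1/b$ everywhere (immediate for $x\ge 0$, where $\log F(x)=-x/b-\log 2$; for $x<0$ one has $|F'(x)/F(x)|=\tfrac{e^{x/b}/(2b)}{1-e^{x/b}/2}\le 1/b$ since $e^{x/b}<1$), so with $b=4/\eps$ the $F$-term changes by at most $e^{2\eps/4}=e^{\eps/2}$. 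Multiplying, the conditional probability of $w$ under $S$ is at most $e^{\eps}$ times that under $S'$; since $\nu_1,\dots,\nu_{k-1}$ have the same law under both databases, integrating them out gives $\Pr[\mathcal{A}(S)=w]\le e^{\eps}\Pr[\mathcal{A}(S')=w]$.

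The step I expect to be the crux is the accounting in the second observation: the substitution in $\hat t$ propagates into the threshold of the round-$k$ comparison, so the quantity seen by $F$ is displaced by as much as $2$, not merely $1$. This is precisely why \texttt{AboveThreshold} draws its query noise from $\Lap(4/\eps)$ rather than $\Lap(2/\eps)$ (which the naive count would suggest), and mismatching the scale parameters here --- or forgetting that the threshold shift feeds into the query test --- is the natural pitfall. The only other point requiring care is the passage from individual transcripts to arbitrary events given the unbounded round count, which the truncation-and-limit argument of the first paragraph settles.
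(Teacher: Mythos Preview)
The paper does not supply its own proof of this theorem: it is quoted as a preliminary result and attributed to \cite{DNRRV09,PMW_HR10} without further argument. So there is nothing in the paper to compare against directly.

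That said, your sketch is correct and is essentially the standard proof one finds in the cited sources (and in the Dwork--Roth monograph). The key ideas --- conditioning on the first $k-1$ query noises so that only the threshold noise $\rho$ and the final query noise $\nu_k$ remain random, shifting the threshold variable by $g(S)-g(S')$ to align the $\bot$-events exactly, and then absorbing the residual displacement of size at most $2$ in the round-$k$ comparison using the larger $\Lap(4/\eps)$ scale --- are exactly the ones used in those references. Your observation that the survival function $F$ of $\Lap(b)$ is $1/b$-log-Lipschitz (not just the density) is the right tool for the last step, and your explanation of why the query noise must be $\Lap(4/\eps)$ rather than $\Lap(2/\eps)$ is precisely the crux. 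The truncation-and-limit argument for the infinite all-$\bot$ transcript is a clean way to handle the unbounded-round issue; the cited proofs typically leave this implicit or restrict to a finite horizon.

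One small point worth making explicit (which you leave implicit): the queries $f_1,\dots,f_k$ may be chosen adaptively by an adversary based on the outputs so far, but since the transcript $w=(\bot^{k-1},\top)$ completely determines what the adversary has seen in each round, fixing $w$ (and the adversary's internal coins) fixes the queries, so treating them as deterministic in the pointwise calculation is legitimate.
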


\paragraph{Privately approximating the median of the data.}
Given a database $S\in X^*$, consider the task of {\em privately} identifying an {\em approximate median} of $S$. Specifically, for an error parameter $\Gamma$, we want to identify an element $x\in X$ such that there are at least $|S|/2-\Gamma$ elements in $S$ that are bigger or equal to $x$, and there are at least $|S|/2-\Gamma$ elements in $S$ that are smaller or equal to $x$. The goal is to keep $\Gamma$ as small as possible, as a function of the privacy parameters $\eps,\delta$, the database size $|S|$, and the domain size $|X|$.

There are several advanced constructions in the literature with error that grows very slowly as a function of the domain size (only polynomially with $\log^*|X|$).~\cite{BNS13b,BNSV15,BunDRS18,KaplanLMNS19} In our application, however, the domain size is already small, and hence, we can use simpler constructions (where the error grows logarithmically with the domain size).

\begin{theorem}\label{thm:Pmed}
There exists an $(\eps,0)$-differentially private algorithm that given a database $S\in X^*$ outputs an element $x\in X$ such that with probability at least $1-\delta$ there are at least $|S|/2-\Gamma$ elements in $S$ that are bigger or equal to $x$, and there are at least $|S|/2-\Gamma$ elements in $S$ that are smaller or equal to $x$, where $\Gamma=O\left(\frac{1}{\eps}\log\left(\frac{|X|}{\delta}\right)\right)$.
\end{theorem}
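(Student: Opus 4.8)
The plan is to use the exponential mechanism with a scoring function tailored to the two-sided median requirement. Assume without loss of generality that $X$ is linearly ordered (say $X=\{1,\dots,|X|\}$). For a database $S\in X^*$ of size $n=|S|$ and an element $x\in X$, set
\[
q(S,x)\ :=\ \min\bigl\{\,|\{i:s_i\ge x\}|\,,\ |\{i:s_i\le x\}|\,\bigr\}\ -\ \frac{n}{2}.
\]
First I would verify that for every fixed $x$ the map $S\mapsto q(S,x)$ has sensitivity at most $1$: if $S,S'$ differ on one row then each of the two counts changes by at most $1$ and $n$ is unchanged, so their minimum --- and hence $q$ --- changes by at most $1$.

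Next I would record the two elementary properties of $q$ that the rest of the argument rests on. \emph{(i)} If $q(S,x)\ge-\Gamma$ then simultaneously $|\{i:s_i\ge x\}|\ge n/2-\Gamma$ and $|\{i:s_i\le x\}|\ge n/2-\Gamma$, which is precisely the approximate-median conclusion we want. \emph{(ii)} $\max_{x\in X}q(S,x)\ge0$: take $x^{*}$ to be an exact median element of $S$; since every $s_i$ lies in $X$ we have $x^{*}\in X$, and both counts at $x^{*}$ are at least $n/2$, so $q(S,x^{*})\ge0$.

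Then I would invoke the exponential mechanism: output $x\in X$ with probability proportional to $\exp\!\bigl(\tfrac{\eps}{2}\,q(S,x)\bigr)$. Its privacy guarantee yields $(\eps,0)$-differential privacy (the score has sensitivity $1$), and its standard utility guarantee states that with probability at least $1-\delta$ the output $x$ satisfies $q(S,x)\ge\max_{x'\in X}q(S,x')-\tfrac{2}{\eps}\ln(|X|/\delta)$. Combining this with properties \emph{(ii)} and \emph{(i)} gives, with probability at least $1-\delta$, the desired guarantee with $\Gamma=\tfrac{2}{\eps}\ln(|X|/\delta)=O\!\bigl(\tfrac{1}{\eps}\log(|X|/\delta)\bigr)$.

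The one genuinely delicate point is engineering the scoring function so that all three requirements hold at once --- constant sensitivity, a \emph{two-sided} median guarantee extracted from a single scalar score, and an optimum value that is not too negative --- and the $\min$ of the two tail counts accomplishes all three. A minor bookkeeping issue is the convention for ``differ on one row'': under replacement $n$ is fixed and the sensitivity is exactly $1$ as above, whereas under an add/remove convention one absorbs a further additive constant into $\Gamma$, which does not affect the stated asymptotics. If one prefers to avoid citing the exponential mechanism, an alternative is a differentially private binary search over $X$ using $O(\log|X|)$ invocations of the Laplace mechanism (or \texttt{AboveThreshold}) together with composition; but the exponential mechanism gives the cleanest proof and exactly the claimed dependence on $|X|$.
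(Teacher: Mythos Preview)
The paper does not actually prove this theorem; it is stated in the preliminaries as a known result (the surrounding text points to the literature and notes that when $|X|$ is small the simple construction with error logarithmic in $|X|$ suffices). Your exponential-mechanism argument is exactly the standard way to establish this bound and it is correct: the score $q(S,x)=\min\{|\{i:s_i\ge x\}|,|\{i:s_i\le x\}|\}-n/2$ has sensitivity $1$ under the replace-one-row model, its maximum over $X$ is nonnegative (attained at any true median of $S$, which lies in $X$), and the exponential-mechanism utility bound then gives $\Gamma=\tfrac{2}{\eps}\ln(|X|/\delta)$. So there is nothing to compare against, and your proof is both valid and the canonical one.
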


\paragraph{Composition of differential privacy.}
The following theorem allows to argue about the privacy guarantees of an algorithm that accesses its input database using several differentially private mechanisms.

\begin{theorem}[\cite{DRV10}]\label{thm:composition2}
Let $0<\eps,\delta'\leq1$, and let $\delta\in[0,1]$. A mechanism that permits $k$ adaptive interactions with mechanisms that preserves $(\eps,\delta)$-differential privacy (and does not access the database otherwise) ensures $(\eps', k\delta+\delta')$-differential privacy, for $\eps'=\sqrt{2k\ln(1/\delta')}\cdot\eps+2k\eps^2$.
\end{theorem}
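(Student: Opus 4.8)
The plan is to use the standard privacy-loss-plus-concentration argument, in two stages: first prove the statement for \emph{pure} privacy ($\delta=0$), and then reduce the $(\eps,\delta)$ case to the pure case at an additive cost of $k\delta$.

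For the pure case, fix neighboring databases $S\sim S'$ and view the composed mechanism as producing a transcript $\vec{o}=(o_1,\dots,o_k)$, where the mechanism used in round $i$, call it $M_i$, is an arbitrary (history-dependent) $(\eps,0)$-DP mechanism determined by $o_1,\dots,o_{i-1}$ (if that choice is itself randomized, its coins are independent of the database, so we may condition on them). Since the density of the transcript under $S$ factors as $\prod_{i}\Pr[o_i\mid o_{<i},S]$, the \emph{privacy loss} $L(\vec{o})=\ln\frac{\Pr_S[\vec{o}]}{\Pr_{S'}[\vec{o}]}$ telescopes into $\sum_{i=1}^{k}L_i$ with $L_i=\ln\frac{\Pr[o_i\mid o_{<i},S]}{\Pr[o_i\mid o_{<i},S']}$. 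I would then record the two facts that power the argument: (i) $(\eps,0)$-DP of $M_i$ gives $|L_i|\le\eps$ surely, for every realization of the history; and (ii) the expectation of $L_i$ over $o_i\sim M_i(S)$, conditioned on the history, equals $D_{\mathrm{KL}}\!\left(M_i(S)\,\|\,M_i(S')\right)$, which is at most $\eps(e^{\eps}-1)\le 2\eps^2$ for $\eps\le1$ --- a short direct estimate on the KL divergence of two $(\eps,0)$-indistinguishable distributions.

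Next I would run a martingale argument over the filtration $\mathcal{F}_i=\sigma(o_1,\dots,o_i)$: the increments $X_i:=L_i-\E[L_i\mid\mathcal{F}_{i-1}]$ form a martingale difference sequence, and by (i) each $X_i$ lies, almost surely and for every history, in an interval of width at most $2\eps$ --- this is exactly where the adaptivity of the adversary is absorbed, since the bound holds per realization of the history. Azuma--Hoeffding then gives $\Pr\!\left[\sum_i X_i\ge\tau\right]\le\exp\!\left(-\tau^2/(2k\eps^2)\right)$, and taking $\tau=\eps\sqrt{2k\ln(1/\delta')}$ makes this at most $\delta'$; combined with $\sum_i\E[L_i\mid\mathcal{F}_{i-1}]\le 2k\eps^2$ from (ii), we conclude that $L(\vec{o})\le\eps'$ except with probability $\delta'$ over the composition run on $S$. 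The passage from ``$L>\eps'$ with probability $\le\delta'$'' back to $(\eps',\delta')$-DP is the routine splitting step: for any event $T$, $\Pr_S[\vec{o}\in T]\le\Pr_S[\vec{o}\in T,\ L\le\eps']+\Pr_S[L>\eps']\le e^{\eps'}\Pr_{S'}[\vec{o}\in T]+\delta'$, and symmetrically. This settles $\delta=0$.

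The remaining step --- removing the assumption $\delta=0$ --- is the one I expect to be the main obstacle, since it is exactly the point where these proofs are usually terse. I would invoke the standard structural (``deletion'' / approximate-indistinguishability) lemma: for each round $i$ and each realized history, an $(\eps,\delta)$-DP mechanism can be replaced by an $(\eps,0)$-DP one that agrees with it up to statistical distance $\delta$ on each of the two neighboring inputs. Applying this at all $k$ rounds and union-bounding over the $k$ associated low-probability ``bad'' events costs at most $k\delta$; on the complementary event the transcript is distributed exactly as in the pure setting analyzed above, which contributes the extra $\delta'$. Hence the composition is $(\eps',k\delta+\delta')$-differentially private with $\eps'=\eps\sqrt{2k\ln(1/\delta')}+2k\eps^2$, as claimed. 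The care needed here is twofold: (a) the deletion lemma must be applied \emph{conditionally} on the realized history so that it meshes with the adaptive, history-dependent choice of $M_i$ --- the ``cleaned'' mechanisms remain history-dependent, which is harmless for the martingale; and (b) one must track the constants through the KL bound and the Azuma step so that the additive term is exactly $2k\eps^2$ and the deviation term exactly $\eps\sqrt{2k\ln(1/\delta')}$. (Instead of Azuma one may equivalently bound $\E[e^{tL_i}\mid\mathcal{F}_{i-1}]$ and run a Chernoff-type argument; this changes nothing essential.)
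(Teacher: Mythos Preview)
The paper does not give a proof of this theorem: it is stated in the preliminaries as a known result and attributed to~\cite{DRV10}, with no argument supplied. There is therefore nothing in the paper to compare your proposal against.

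That said, your outline is correct and is essentially the original proof from~\cite{DRV10}: decompose the privacy loss along the transcript, use the per-round bounds $|L_i|\le\eps$ and $\E[L_i\mid\mathcal{F}_{i-1}]\le\eps(e^{\eps}-1)\le 2\eps^2$, apply Azuma--Hoeffding to the centered increments, and convert the tail bound on $L$ back into $(\eps',\delta')$-DP; then handle $\delta>0$ via the structural ``deletion'' lemma at a cost of $k\delta$. Your cautions about applying the deletion lemma conditionally on the realized history, and about the Azuma bound holding per-history so that adaptivity is harmless, are exactly the points that need care, and you have identified them correctly.
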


\paragraph{Generalization properties of differential privacy.}
Dwork et al.~\cite{DFHPRR14} and Bassily et al.~\cite{BassilyNSSSU16} showed that if a predicate $h$ is the result of a differentially private computation on a random sample, then the empirical average of $h$ and its expectation over the underlying distribution are guaranteed to be close.

\begin{theorem}[\cite{DFHPRR14,BassilyNSSSU16}]\label{thm:DPgeneralization}
Let $\eps \in (0,1/3)$, $\delta \in (0,\eps/4)$, and $n\geq\frac{1}{\eps^2}\log(\frac{2\eps}{\delta})$.
Let $\AAA:X^n\rightarrow2^X$ be an $(\eps,\delta)$-differentially private algorithm that operates on a database of size $n$ and outputs a predicate $h:X\rightarrow\{0,1\}$. Let $\DDD$ be a distribution over $X$, let $S$ be a database containing $n$ i.i.d.\ elements from $\DDD$, and let $h\leftarrow\AAA(S)$. Then 
$$
\Pr_{\substack{S\sim\DDD \\ h\leftarrow \AAA(S)}}\left[ \left| \frac{1}{|S|}\sum_{x\in S}h(x) - \E_{x\sim\DDD}[h(x)] \right| \geq 10\eps \right] < \frac{\delta}{\eps}.
$$
\end{theorem}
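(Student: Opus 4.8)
This is the standard ``differential privacy implies generalization'' bound from~\cite{DFHPRR14,BassilyNSSSU16}, so my plan is to reconstruct their argument; it combines a first-moment bound, obtained directly from the one-row stability of privacy, with a ``monitor'' boosting step that upgrades it to the stated high-probability guarantee. Throughout, write $\mathrm{err}(S,h)=\frac1{|S|}\sum_{x\in S}h(x)-\E_{x\sim\DDD}[h(x)]$.

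\emph{Step 1 (first-moment bound).} For \emph{any} $(\eps,\delta)$-differentially private $\BBB$ that, on a database of $n$ i.i.d.\ samples from $\DDD$, outputs a predicate $h$, I would first show $\E[\mathrm{err}(S,h)]\le e^{\eps}-1+\delta$ by the familiar one-row swap: fixing a coordinate $i$ and replacing $X_i$ by a fresh i.i.d.\ copy gives a neighbouring database, so applying the privacy definition to the (fixed-$x$) event $\{h:x\in h\}$ and then integrating over $x\sim\DDD$, over the remaining rows, and over $i$, yields $\E_{S,h}[\frac1n\sum_i h(X_i)]\le e^{\eps}\,\E_{S,h}[\E_\DDD h]+\delta$; since $\E_\DDD h\le1$ this is the claim, and the same bound applied to the algorithm that returns $1-h$ gives the matching lower bound. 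For $\eps<1/3$, $\delta<\eps/4$ this already gives $|\E[\mathrm{err}(S,h)]|=O(\eps)$.

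\emph{Step 2 (boosting).} Markov's inequality on Step 1 gives only a constant failure probability, so I would boost via a monitor argument. First augment $\AAA$ into a mechanism $M$ that \emph{also} releases $\hat a=\frac1n\sum_{x\in S}h(x)+\Lap(O(1/(n\eps)))$; this $O(1/n)$-sensitivity statistic costs only $O(\eps)$ in the privacy parameter, so $M$ is still $(O(\eps),\delta)$-private, and for $n\ge\frac1{\eps^2}\log(2\eps/\delta)$ its additive error is $O(\eps)$ with probability $1-O(\delta/\eps)$. Now suppose toward a contradiction that $\Pr_{S,h\leftarrow\AAA(S)}[\mathrm{err}(S,h)\ge c_1\eps]\ge\beta$. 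Draw $T=\lceil c_2/\beta\rceil$ independent databases $S^1,\dots,S^T\sim\DDD^n$, run $M$ on each to get $(h^k,\hat a^k)$, and let a monitor $\mathcal W$ — which sees only the outputs $(h^k,\hat a^k)$, never the raw $S^k$ — return $(h^{k^*},k^*)$ with $k^*=\argmax_k(\hat a^k-\E_\DDD h^k)$ (and the all-zero predicate if no copy overfits). The key point is that one row of the product database $(S^1,\dots,S^T)\in X^{nT}$ affects only one pair $(h^k,\hat a^k)$ and by only $(O(\eps),\delta)$, and $\mathcal W$ is mere post-processing, so the composed mechanism is $(O(\eps),\delta)$-private with respect to the product database \emph{with no blow-up of the parameters}. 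Rerunning the Step-1 computation on the composed mechanism — carrying the indicator $\1[\mathrm{block}(r)=k^*]$ through the swap and using that exactly $n$ of the $nT$ rows lie in block $k^*$ — gives $\E[\mathrm{err}(S^{k^*},h^{k^*})]\le O(\eps)+T\delta$. Conversely, since $k^*$ is an $\argmax$ and, with probability at least $1-(1-\beta)^T\ge1-e^{-c_2}$, at least one copy overfits its own database, the selected block has $\mathrm{err}(S^{k^*},h^{k^*})=\Omega(\eps)$ with constant probability (after subtracting the $O(\eps)$ of Laplace slack) and $\ge0$ otherwise, so $\E[\mathrm{err}(S^{k^*},h^{k^*})]=\Omega(\eps)$. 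Comparing, $\Omega(\eps)\le O(\eps)+O(\delta/\beta)$; the numerical hypotheses $\eps<1/3$, $\delta<\eps/4$ together with a large enough $c_1$ are exactly what make the $O(\eps)$ terms cancelable, forcing $\beta=O(\delta/\eps)$ and contradicting a too-large $\beta$. This bounds the upper tail $\Pr[\mathrm{err}(S,h)\ge c_1\eps]<\delta/(2\eps)$; the symmetric argument on $1-h$ bounds the lower tail, and a union bound, with the constants tracked so that $c_1\le10$, gives $\Pr[|\mathrm{err}(S,h)|\ge10\eps]<\delta/\eps$.

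\emph{Main obstacle.} The delicate step is the boosting: arranging that the composed monitor mechanism is private with the \emph{same} parameters as $\AAA$, rather than suffering a $T$-fold degradation that would make the argument vacuous. This is precisely what forces the monitor to detect the overfitting copy from the privatized outputs alone — hence the need to release the noisy empirical average $\hat a$ — and it is why one measures the error on a single, output-selected block rather than averaging over all $nT$ rows, which would dilute the overfitting signal by a factor $T$. Everything else is routine: the one-row-swap calculation, the Laplace- and Hoeffding-tail bounds for the added noise and for the i.i.d.\ sampling fluctuations (where the hypothesis $n\ge\frac1{\eps^2}\log(2\eps/\delta)$ is used), and the bookkeeping that pins the constants down to $10$, $1/3$, and $\eps/4$.
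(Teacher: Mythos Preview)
The paper does not prove this theorem at all: it is stated in the preliminaries as a known result imported from~\cite{DFHPRR14,BassilyNSSSU16}, with no proof or proof sketch given. So there is nothing in the paper to compare your proposal against.

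That said, your reconstruction is essentially the argument of~\cite{BassilyNSSSU16}: a one-row-swap expectation bound followed by the monitor trick to amplify it to a high-probability statement. One minor difference worth flagging: in the original presentation the monitor is permitted to see the raw databases when selecting $k^*$, and the composed mechanism is shown to be $(\eps,\delta)$-private w.r.t.\ the product database directly (since a single-row change affects only one copy of $\AAA$, and the remaining computation is analyzed inside the swap lemma itself). Your variant---releasing a Laplace-noised empirical average $\hat a$ so that the monitor is literal post-processing of private outputs---is a clean and correct alternative that avoids having to reprove the swap lemma in the ``indexed'' form; the cost is the extra $O(\eps)$ slack and the use of the hypothesis $n\ge\frac{1}{\eps^2}\log(2\eps/\delta)$ to control that noise, which is exactly where that hypothesis enters. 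Either route yields the stated bound with the stated constants.
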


\section{Differential Privacy as a Tool for Robust Streaming}

In this section we present our main construction -- algorithm \texttt{RobustSketch}. Recall that the main challenge when designing adversarially robust streaming algorithms is that the elements in the stream can depend on the internal state of the algorithm. To overcome this challenge, we protect the internal state of algorithm \texttt{RobustSketch} using differential privacy. 

Suppose that we have an oblivious streaming algorithm $\AAA$ for a function $g$. In our construction we run $k$ independent copies of $\AAA$ with independent randomness, and feed the input stream to all of the copies. When a query comes, we aggregate the responses from the $k$ copies in a way that protects the internal randomness of each of the copies using differential privacy. In addition, assuming that the {\em flip number}~\cite{BenEliezerJWY20} of the stream is small, we get that the number of times that we need to compute such an aggregated response is small. We use the sparse vector technique (algorithm \texttt{AboveThreshold})~\cite{DNRRV09} in order to identify the time steps in which we need to aggregate the responses of the $k$ copies of $\AAA$, and the aggregation itself is done using a differentially private algorithm for approximating the {\em median} of the responses.

\begin{algorithm*}[t!]
\caption{\bf \texttt{RobustSketch}}\label{alg:RobustSketch}

{\bf Input:} Parameters $\alpha,\lambda,\eps,\delta,k$, and a collection of $k$ random strings $R=(r_1,\dots,r_k)\in\left(\{0,1\}^*\right)^k$.

{\bf Algorithm used:} An oblivious streaming algorithm $\AAA$ for a functionality $g$ that guarantees that with probability at least $9/10$, all its estimates are accurate to within multiplicative error of $(1\pm\frac{\alpha}{10})$.

\begin{enumerate}[leftmargin=15pt,rightmargin=10pt,itemsep=1pt,topsep=3pt]

\item\label{step:algoInit} 
Initialize $k$ independent instances $\AAA_1,\dots,\AAA_k$ of algorithm $\AAA$ with the random strings $r_1,\dots,r_k$, respectively. 

\item\label{step:answerInit} Let $\tilde{g}\leftarrow g(\bot)$ and denote $\eps_0=\frac{\eps}{16\sqrt{\lambda \ln(1/\delta)}}$

\item\label{step:while} REPEAT at most $\lambda$ times (outer loop)

\begin{enumerate}
	\item Let $\hat{t}\leftarrow \frac{k}{2}+\Lap(\frac{1}{\eps_0})$
	
	\item REPEAT (inner loop)

	\begin{enumerate}
		\item Receive next update $(a_i,\Delta_i)$
		
		\item Insert update $(a_i,\Delta_i)$ into each algorithm $\AAA_1,\dots,\AAA_k$ and obtain answers $y_{i,1},\dots,y_{i,k}$
		
		\item\label{step:innerEst} If $\left|\left\{ j : \tilde{g}\notin(1\pm\frac{\alpha}{2}) \cdot y_{i,j}  \right\}\right|+\Lap(\frac{1}{\eps_0})<\hat{t}$, then output estimate $\tilde{g}$ and CONTINUE inner loop. Otherwise, EXIT inner loop.
	\end{enumerate}
	
	\item Recompute $\tilde{g}\leftarrow\texttt{PrivateMed}(y_{i,1},\dots,y_{i,k})$, where \texttt{PrivateMed} is an $(\eps_0,0)$-differentially private algorithm for estimating the median of the data (see Theorem~\ref{thm:Pmed}).
	
	\item\label{step:outerEst} Output estimate $\tilde{g}$ and CONTINUE outer loop.
	
\end{enumerate}

\end{enumerate}
\end{algorithm*}

\begin{lemma}\label{lem:privacy}
Algorithm \texttt{RobustSketch} satisfies $(\eps,\delta)$-differential privacy (w.r.t.\ the collection of strings $R$).
\end{lemma}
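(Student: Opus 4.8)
The plan is to treat \texttt{RobustSketch} as a randomized map from the database $R=(r_1,\dots,r_k)$ --- whose rows are the strings $r_j$, so that neighboring databases differ in a single $r_j$ --- to the sequence of released estimates, and to decompose its access to $R$ into a bounded number of differentially private sub-computations so that the composition theorem (Theorem~\ref{thm:composition2}) applies. The first observation is that the output stream reveals nothing about $R$ beyond, for each of the at most $\lambda$ outer-loop iterations: (i)~the time step at which the inner loop exits (equivalently, the $\bot/\top$ pattern of the comparisons in Step~\ref{step:innerEst}); and (ii)~the value $\tilde g$ produced by \texttt{PrivateMed} at the end of that iteration. This is because, throughout an inner loop, the algorithm repeatedly outputs the \emph{same} value $\tilde g$, which was already fixed at the start of the iteration --- this is just post-processing of the previous iteration's output. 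Hence it suffices to bound the privacy of these per-iteration quantities.

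Second, I would show that a single outer iteration is $O(\eps_0)$-differentially private with respect to $R$. Condition on the entire transcript of all previous iterations; this fixes the value of $\tilde g$ at the start of the current iteration, and since the only outputs produced during the inner loop are copies of this fixed $\tilde g$, it also fixes (independently of $R$) the stream updates $(a_i,\Delta_i)$ that the adversary will feed during this iteration. Under this conditioning, the inner loop is exactly an instance of \texttt{AboveThreshold} run on $R$, with threshold $k/2$, threshold-noise $\Lap(1/\eps_0)$, per-query noise $\Lap(1/\eps_0)$, and queries $f_i(R)=\big|\{\,j : \tilde g\notin(1\pm\tfrac{\alpha}{2})\cdot y_{i,j}\,\}\big|$. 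Each $f_i$ has sensitivity $1$ in $R$: changing a single $r_j$ changes only the answer $y_{i,j}$ of copy $\AAA_j$ (the other copies use unchanged randomness and see the same stream), so the count changes by at most one. By the analysis underlying Theorem~\ref{thm:aThresh}, this inner loop is $(c_1\eps_0,0)$-differentially private for an absolute constant $c_1$ (tracking the two noise scales gives $c_1=3$), uniformly in the number of inner-loop rounds. The subsequent call $\texttt{PrivateMed}(y_{i,1},\dots,y_{i,k})$ is a function of $R$ whose input vector changes in at most one coordinate when one $r_j$ changes, hence it is $(\eps_0,0)$-differentially private by Theorem~\ref{thm:Pmed}. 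By basic composition, the outer iteration is $((c_1+1)\eps_0,0)$-differentially private.

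Third, I would view the whole algorithm as making at most $\lambda$ adaptive interactions with $((c_1+1)\eps_0,0)$-differentially private mechanisms (one per outer iteration) and apply Theorem~\ref{thm:composition2} with $k=\lambda$ and $\delta'=\delta$, obtaining $(\eps',\delta)$-differential privacy with $\eps' = \sqrt{2\lambda\ln(1/\delta)}\cdot(c_1+1)\eps_0 + 2\lambda\big((c_1+1)\eps_0\big)^2$. Plugging in $\eps_0=\frac{\eps}{16\sqrt{\lambda\ln(1/\delta)}}$ makes the first term a small constant multiple of $\eps$ (about $\tfrac{\sqrt2}{4}\eps$ when $c_1=3$) and the second term $O\!\big(\eps^2/\ln(1/\delta)\big)$; using $\eps\le 1$ and that $\delta$ is small, the sum is at most $\eps$, which proves the lemma. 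One also checks the hypotheses $0<(c_1+1)\eps_0\le 1$ and $\delta\le1$ required by Theorem~\ref{thm:composition2}.

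I expect the main obstacle to be the bookkeeping in the second step: making rigorous that, once the transcript of earlier iterations is fixed, the stream fed during the current iteration is determined independently of $R$, so that the inner loop is genuinely an instance of \texttt{AboveThreshold} with sensitivity-$1$ queries and the per-iteration mechanisms satisfy the adaptive-composition hypothesis of Theorem~\ref{thm:composition2}. The constant-chasing (pinning down $c_1$ and verifying $\eps'\le\eps$) is routine.
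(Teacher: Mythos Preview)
Your proposal is correct and follows exactly the approach of the paper's proof sketch: decompose each outer-loop iteration into an \texttt{AboveThreshold} instance plus a call to \texttt{PrivateMed}, observe that each is $O(\eps_0)$-differentially private with respect to $R$, and apply advanced composition (Theorem~\ref{thm:composition2}) over the $\lambda$ iterations. The paper's own proof is a two-sentence sketch that simply asserts these facts; your version fills in the details they omit, including the sensitivity-$1$ argument for the counting queries, the post-processing observation that the repeated outputs of $\tilde g$ inside an inner loop leak nothing new, and the constant-tracking needed because the noise scales in Step~\ref{step:innerEst} are $\Lap(1/\eps_0)$ rather than the $\Lap(2/\eps),\Lap(4/\eps)$ of the stated \texttt{AboveThreshold}.
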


\begin{proof}[Proof sketch]
Each execution of the outer loop consists of applying algorithm \texttt{AboveThreshold} and applying algorithm \texttt{PrivateMed}, each of which satisfies $(\eps_0,0)$-differential privacy. The lemma now follows from composition theorems for differential privacy (see Theorem~\ref{thm:composition2}).
\end{proof}

Recall that algorithm \texttt{RobustSketch} might halt before the stream ends. In the following lemma we show that (w.h.p.)\ all the answers that \texttt{RobustSketch} returns before it halts are accurate. Afterwards, in Lemma~\ref{lem:iterations}, we show that (w.h.p.)\ the algorithm does not halt prematurely. 

\begin{lemma}\label{lem:utility}
Let $\AAA$ be an oblivious streaming algorithm for a functionality $g$, that guarantees that with probability at least $9/10$, all its estimates are accurate to within multiplicative error of $(1\pm\frac{\alpha}{10})$. Then, with probability at least $1-\delta$ all the estimates returned by \texttt{RobustSketch} before it halts are accurate to within multiplicative error of $(1\pm\alpha)$, even when the stream is chosen by an adaptive adversary, provided that
$$k=\Omega\left(\frac{1}{\eps}\sqrt{\lambda\cdot\log\left(\frac{1}{\delta}\right)}\cdot\log \left(\frac{m}{\alpha\delta}\right)\right).$$
\end{lemma}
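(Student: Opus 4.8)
The plan is to use the generalization property of differential privacy (Theorem~\ref{thm:DPgeneralization}) to reduce the adaptive setting to the oblivious one. The key observation is that each copy $\AAA_j$ of the oblivious algorithm is run on a stream that depends on the random strings $R=(r_1,\dots,r_k)$, but only \emph{through} the outputs of \texttt{RobustSketch}. Since \texttt{RobustSketch} is $(\eps,\delta)$-differentially private with respect to $R$ by Lemma~\ref{lem:privacy}, the transcript seen by the adversary --- and hence the actual stream $\vec{a}_i$ produced --- is a differentially private function of $R$. Therefore, for any fixed time step $i$ and any fixed predicate on the strings, the empirical fraction of copies $\AAA_j$ satisfying that predicate on the (adversarially generated) stream prefix $\vec{a}_i$ is close to what it would be on a fresh independent stream. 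Concretely, I would define, for the realized stream, the predicate $h(r) = 1$ iff the copy $\AAA$ run with randomness $r$ on $\vec{a}_i$ is accurate to within $(1\pm\frac{\alpha}{10})$ at all steps up to $i$; on a fresh stream this predicate has expectation $\geq 9/10$ by the oblivious guarantee, so by Theorem~\ref{thm:DPgeneralization} (with privacy parameter roughly $\eps_0$ per step, but we must be careful --- see below) at least, say, a $3/4$ fraction of the $k$ copies are accurate at step $i$ with high probability.

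From there the argument is combinatorial. At the point where the inner loop exits (line~\ref{step:innerEst}), the noisy count of ``bad'' copies exceeded the noisy threshold $\hat t \approx k/2$; since the Laplace noises are $O(\frac{1}{\eps_0}\log(m\lambda/\delta)) = o(k)$ with high probability for our choice of $k$, this means a genuine constant fraction (say $\geq 1/3$) of copies have $\tilde g \notin (1\pm\frac{\alpha}{2}) y_{i,j}$. Intersecting with the $\geq 3/4$ fraction that are accurate (output within $(1\pm\frac{\alpha}{10})$ of $g(\vec a_i)$), there is a copy that is both accurate and flags $\tilde g$, which certifies that the \emph{old} $\tilde g$ has drifted away from $g(\vec a_i)$ --- i.e.\ a genuine ``flip'' has occurred, so the outer loop can only execute $\lambda_{\alpha/10,m}(g) \le \lambda$ times, and this is exactly what will be exploited in Lemma~\ref{lem:iterations}. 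Conversely, while the inner loop continues, the same counting argument (now the true bad count is $\leq k/2 + o(k) < 2k/3$, so a majority of copies are accurate \emph{and} satisfy $\tilde g \in (1\pm\frac{\alpha}{2}) y_{i,j}$, hence $\tilde g \in (1\pm\alpha) g(\vec a_i)$) shows the released estimate $\tilde g$ is accurate. Similarly, right after a recompute, \texttt{PrivateMed} applied to $y_{i,1},\dots,y_{i,k}$ returns, by Theorem~\ref{thm:Pmed}, an element within the range spanned by all but $\Gamma = O(\frac{1}{\eps_0}\log(k/\delta)) = o(k)$ of the $y_{i,j}$; since a strict majority of these are in $(1\pm\frac{\alpha}{10})g(\vec a_i)$, the returned median lies in $(1\pm\frac{\alpha}{10})g(\vec a_i)$ as well.

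The main obstacle is the \emph{composition of the generalization bound over all $m$ time steps and all $\lambda$ recomputes}, while keeping the effective privacy parameter fed into Theorem~\ref{thm:DPgeneralization} small enough that $10\eps$-type error translates to the claimed bound on $k$. One has to be careful about \emph{which} DP parameter governs generalization at step $i$: it is the privacy of the entire interaction transcript up to step $i$, which by advanced composition (Theorem~\ref{thm:composition2}) over the $\le 2\lambda$ private subroutines is $(\eps,\delta)$ overall with $\eps_0 = \frac{\eps}{16\sqrt{\lambda\ln(1/\delta)}}$ --- this is precisely why $\eps_0$ was set as it is in line~\ref{step:answerInit}. Then, to get a failure probability of $\delta$ after a union bound over the $m$ steps (and the $O(m\lambda)$ Laplace/PrivateMed events), each individual generalization and concentration statement must fail with probability $\ll \delta/\poly(m)$, which forces $k = \Omega(\frac{1}{\eps}\sqrt{\lambda\log(1/\delta)}\cdot\log(m/(\alpha\delta)))$ --- the extra $\log(m/(\alpha\delta))$ factor over the ``naive'' $\frac1\eps\sqrt{\lambda\log(1/\delta)}$ is exactly the price of the union bound and of making the Laplace/$\Gamma$ noise $o(k)$. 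A secondary subtlety is that the predicate $h$ above is defined using the realized stream, which itself depends on $R$; the standard resolution (as in the adaptive data analysis literature) is that Theorem~\ref{thm:DPgeneralization} is applied to the \emph{composed} mechanism (adversary $\circ$ \texttt{RobustSketch}), whose output includes a description of the stream prefix, and one invokes the theorem once per step on this composed, still-differentially-private object; I would set this up carefully at the start of the proof.
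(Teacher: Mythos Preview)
Your proposal is correct and follows essentially the same route as the paper's proof: bound all Laplace noises, apply the DP generalization theorem (Theorem~\ref{thm:DPgeneralization}) to the composed mechanism $(\text{adversary}\circ\texttt{RobustSketch})$ using the overall $(\eps,\delta)$ privacy of Lemma~\ref{lem:privacy} to conclude that at every step a large constant fraction of the $k$ copies are $(1\pm\frac{\alpha}{10})$-accurate, and then do a pigeonhole argument separately for outputs produced in the inner loop (Step~\ref{step:innerEst}) and for outputs produced by \texttt{PrivateMed} (Step~\ref{step:outerEst}). The paper uses the fraction $4k/5$ where you use $3/4$, and in Case~(a) it only needs a \emph{nonempty} intersection between the ``accurate'' set and the ``$\tilde g\in(1\pm\frac{\alpha}{2})y_{i,j}$'' set (not a majority as you wrote), but this does not affect correctness; your discussion of the flip argument properly belongs to Lemma~\ref{lem:iterations}, as you note.
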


\begin{proof}
First observe that the algorithm samples at most $2m$ noises from the Laplace distribution with parameter $\eps_0$ throughout the execution. By the properties of the Laplace distribution, with probability at least $1-\delta$ it holds that {\em all} of these noises are at most $\frac{1}{\eps_0}\log(\frac{2m}{\delta})$ in absolute value. We continue with the analysis assuming that this is the case.

For $i\in[m]$ let $\vec{a}_i=((a_1,\Delta_1),\dots,(a_i,\Delta_i))$ denote the stream consisting of the first $i$ updates. 
Let $\AAA(r,\vec{a}_i)$ denote the estimate returned by the oblivious streaming algorithm $\AAA$ after the $i$th update, when it is executed with the random string $r$ and receives the stream $\vec{a}_i$.
Consider the following function:
$$
f_{\vec{a}_i}(r)=\1\left\{\AAA(r,\vec{a}_i)\in\left(1\pm\frac{\alpha}{10}\right)\cdot g(\vec{a}_i)\right\}.
$$
Then, by the generalization properties of differential privacy (see Theorem~\ref{thm:DPgeneralization}), assuming that $k\geq\frac{1}{\eps^2}\log(\frac{2\eps m}{\delta})$, with probability at least $1-\frac{\delta}{\eps}$, for every $i\in[m]$ it holds that
$$
\left| \E_{r}[f_{\vec{a}_i}(r)] - \frac{1}{k}\sum_{j=1}^k f_{\vec{a}_i}(r_j) \right|\leq 10\eps.
$$
We continue with the analysis assuming that this is the case.
Now observe that $\E_{r}[f_{\vec{a}_i}(r)]\geq9/10$ by the utility guarantees of $\AAA$ (because when the stream is fixed its answers are accurate to within multiplicative error of $(1\pm\frac{\alpha}{10})$ with probability at least $9/10$). Thus, for $\eps\leq\frac{1}{100}$,  
for at least $(\frac{9}{10}-10\eps)k\geq4k/5$ of the executions of $\AAA$ we have that $f_{\vec{a}_i}(r_j)=1$, which means that $y_{i,j}\in(1\pm\frac{\alpha}{10})\cdot g(\vec{a}_i)$. That is, in every time step $i\in[m]$ we have that at least $4k/5$ of the $y_{i,j}$'s satisfy $y_{i,j}\in(1\pm\frac{\alpha}{10})\cdot g(\vec{a}_i)$.

\paragraph{Case (a)} 
If the algorithm outputs an estimate on Step~\ref{step:innerEst}, then, by our assumption on the noise magnitude we have that
$$
\left|\left\{ j : \tilde{g}\in\left(1\pm\frac{\alpha}{2}\right) \cdot y_{i,j}  \right\}\right|\geq \frac{k}{2} - \frac{2}{\eps_0}\log\left(\frac{2m}{\delta}\right)\geq \frac{4k}{10},
$$
where the last inequality follows by asserting that
$$k=\Omega\left(\frac{1}{\eps_0}\log\left(\frac{m}{\delta}\right)\right)
=\Omega\left(\frac{1}{\eps}\sqrt{\lambda\cdot\log\left(\frac{1}{\delta}\right)}\log\left(\frac{m}{\delta}\right)\right).$$
So, for at least $4k/5$ of the $y_{i,j}$'s we have that $y_{i,j}\in(1\pm\frac{\alpha}{10})\cdot g(\vec{a}_i)$, and for at least $4k/10$ of them we have that $\tilde{g}\in(1\pm\frac{\alpha}{2}) \cdot y_{i,j}$. Therefore, there must exist an index $j$ that satisfies both conditions, in which case $\tilde{g}\in(1\pm\alpha)\cdot g(\vec{a}_i)$, and the estimate we output is accurate.

\paragraph{Case (b)} 
If the algorithm outputs an estimate on Step~\ref{step:outerEst}, then it is computed using algorithm \texttt{PrivateMed}, which is executed on the database $(y_{i,1},\dots,y_{i,k})$. 
By theorem~\ref{thm:Pmed}, assuming that\footnote{We assume that the estimates that $\AAA$ returns are in the range $[-n^c,-1/n^c]\cup\{0\}\cup[1/n^c,n^c]$ for some constant $c>0$. In addition, before running \texttt{PrivateMed} we may round each $y_{i,j}$ to its nearest power of $(1+\frac{\alpha}{10})$, which has only a small effect on the error. There are at most $X=O(\frac{1}{\alpha}\log n)$ possible powers of $(1+\frac{\alpha}{10})$ in that range, and hence, \texttt{PrivateMed} guarantees error at most $\Gamma=O(\frac{1}{\eps_0}\log\left(\frac{\lambda}{\alpha\delta}\log n\right))$. See Theorem~\ref{thm:Pmed}.}
$$k
=\Omega\left(\frac{1}{\eps}\sqrt{\lambda\cdot\log\left(\frac{1}{\delta}\right)}\cdot\log \left(\frac{\lambda}{\alpha\delta}\log n\right)\right),
$$
then with probability at least $1-\delta/\lambda$ algorithm \texttt{PrivateMed} returns an approximate median $\tilde{g}$ for the estimates $y_{i,1},\dots,y_{i,k}$, satisfying
$$
\left|\left\{ j : y_{i,j}\geq\tilde{g} \right\}\right|\geq\frac{4k}{10} 
\qquad\text{ and }\qquad 
\left|\left\{ j : y_{i,j}\leq\tilde{g} \right\}\right|\geq\frac{4k}{10}.
$$
Since $4k/5$ of the $y_{i,j}$'s satisfy $y_{i,j}\in(1\pm\frac{\alpha}{10})\cdot g(\vec{a}_i)$, 
such an approximate median must also be in the range $(1\pm\frac{\alpha}{10})$. 
This holds simultaneously for all the estimates computed in Step~\ref{step:outerEst} with probability at least $1-\delta$.
Note that in Case (b) our estimate is actually accurate to within $(1\pm\frac{\alpha}{10})$ rather than $(1\pm\alpha)$.

\medskip
Overall, with probability at least $1-O(\delta)$, all the estimates returned by the algorithm are accurate to within a multiplicative error of $(1\pm\alpha)$.
\end{proof}

We now show that, with high probability, the algorithm does not halt before the stream ends.

\begin{lemma}\label{lem:iterations}
Let algorithm \texttt{RobustSketch} be executed with a parameter $\lambda>\lambda_{\alpha/10,m}(g)$. With probability at least $1-\delta$, the algorithm does not halt before the stream ends.
\end{lemma}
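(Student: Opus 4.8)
The plan is to bound the number of times the inner loop is exited (equivalently, the number of iterations of the outer loop the algorithm actually completes) by $\lambda_{\alpha/10,m}(g)$, which is strictly smaller than $\lambda$; since the outer loop is permitted to run $\lambda$ times, this means it is never exhausted before the stream ends, i.e.\ the algorithm does not halt prematurely. Throughout I would condition on the three events already identified in the proof of Lemma~\ref{lem:utility}, which hold simultaneously with probability at least $1-\delta$ (adjusting the hidden constants in the parameters fed to Lemma~\ref{lem:utility} if necessary): (i) every one of the at most $2m$ Laplace samples drawn by the algorithm has absolute value at most $\frac{1}{\eps_0}\log\frac{2m}{\delta}$; (ii) in every round $i$, at least $4k/5$ of the answers $y_{i,1},\dots,y_{i,k}$ lie in $(1\pm\frac{\alpha}{10})g(\vec{a}_i)$; and (iii) every invocation of \texttt{PrivateMed} (there are at most $\lambda$ of them) returns a value lying in $(1\pm\frac{\alpha}{10})g(\vec{a}_i)$, where $i$ is the round of that invocation.

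Let $i_1<i_2<\cdots<i_L\le m$ be the rounds at which the inner loop is exited, set $i_0:=0$, and for $\ell\in[L]$ let $v_\ell$ denote the value of $\tilde g$ held throughout the $\ell$th outer-loop iteration, so that $v_1=g(\bot)=g(\vec{a}_{i_0})$ exactly, and for $\ell\ge2$ the value $v_\ell$ is the output of the \texttt{PrivateMed} call made at round $i_{\ell-1}$. I would then establish two claims. \emph{Claim A:} for every $\ell\in[L]$, $v_\ell\notin(1\pm\frac{\alpha}{3})g(\vec{a}_{i_\ell})$. Indeed, the exit condition at round $i_\ell$ together with event (i) forces at least $\frac{k}{2}-\frac{2}{\eps_0}\log\frac{2m}{\delta}\ge\frac{2k}{5}$ of the copies $j$ to satisfy $v_\ell\notin(1\pm\frac{\alpha}{2})y_{i_\ell,j}$ (here I use the lower bound $k=\Omega(\frac{1}{\eps_0}\log\frac{m}{\delta})$ inherited from Lemma~\ref{lem:utility}, with a large enough constant); since by event (ii) at most $\frac{k}{5}$ of the copies are ``bad'' at round $i_\ell$, at least one copy $j$ is simultaneously ``good'' (that is, $y_{i_\ell,j}\in(1\pm\frac{\alpha}{10})g(\vec{a}_{i_\ell})$) and satisfies $v_\ell\notin(1\pm\frac{\alpha}{2})y_{i_\ell,j}$, and a direct multiplicative calculation then yields $v_\ell\notin(1\pm\frac{\alpha}{3})g(\vec{a}_{i_\ell})$. \emph{Claim B:} for every $\ell\in[L]$, $v_\ell\in(1\pm\frac{\alpha}{10})g(\vec{a}_{i_{\ell-1}})$; for $\ell=1$ this holds with equality, and for $\ell\ge2$ it is exactly event (iii) applied to the \texttt{PrivateMed} call of round $i_{\ell-1}$.

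Finally I would combine the claims. Fix $\ell\in[L]$: by Claims A and B, $v_\ell$ is a $(1\pm\frac{\alpha}{10})$-multiplicative approximation of $g(\vec{a}_{i_{\ell-1}})$ but not a $(1\pm\frac{\alpha}{3})$-multiplicative approximation of $g(\vec{a}_{i_\ell})$; since $(1+\frac{\alpha}{10})^2<1+\frac{\alpha}{3}$ for $\alpha\le1$, a short computation shows that $g(\vec{a}_{i_{\ell-1}})$ and $g(\vec{a}_{i_\ell})$ differ by more than a $(1+\frac{\alpha}{10})$ factor. (The degenerate cases in which $g$ vanishes or changes sign between these rounds are handled analogously, using that event (ii) forces a majority of the $y_{i,j}$'s to agree with $g(\vec{a}_i)$ in sign/vanishing; these, too, count as a change of $g$.) Hence $0=i_0<i_1<\cdots<i_L\le m$ witnesses $L$ changes of $g$ by a factor of $(1+\frac{\alpha}{10})$ over a stream of length $m$, so $L\le\lambda_{\alpha/10,m}(g)<\lambda$. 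Therefore the algorithm completes fewer than $\lambda$ iterations of its outer loop before the stream ends, which is exactly the assertion of the lemma.

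The step I expect to be the crux is Claim A together with the multiplicative bookkeeping in the final combination: one must verify that the slack $\frac{\alpha}{2}$ in the exit test, after subtracting the $\frac{\alpha}{10}$ error of the $y_{i,j}$'s and the $\frac{\alpha}{10}$ error of \texttt{PrivateMed}, still leaves a net deviation exceeding the $\frac{\alpha}{10}$ threshold appearing in the flip number, and that the lower bound on $k$ borrowed from Lemma~\ref{lem:utility} is large enough to dominate the Laplace-noise terms $\frac{2}{\eps_0}\log\frac{2m}{\delta}$ by $\tfrac1{10}k$.
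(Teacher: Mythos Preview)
Your proposal is correct and follows essentially the same approach as the paper's own proof: condition on the three good events from Lemma~\ref{lem:utility}, show that every exit of the inner loop witnesses a multiplicative change of $g$ by more than $(1+\alpha/10)$ between the current time step and the time step at which $\tilde g$ was last set, and conclude via the flip-number bound. The only cosmetic differences are that the paper uses $\alpha/4$ where you use $\alpha/3$ in Claim~A, and that you make the base case $i_0=0$ (with $v_1=g(\bot)$ exact) explicit, whereas the paper leaves it implicit.
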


\begin{proof}
As in the proof of Lemma~\ref{lem:utility}, with probability at least $1-\delta$ it holds that 
\begin{enumerate}
	\item All of the Laplace noises sampled throughout the execution are at most $\frac{1}{\eps_0}\log(\frac{2m}{\delta})$ in absolute value,
	\item All of the estimates returned on Step~\ref{step:outerEst} are accurate to within a multiplicative error of $(1\pm\frac{\alpha}{10})$,
	\item In every time step $i\in[m]$ we have that at least $4k/5$ of the $y_{i,j}$'s satisfy $y_{i,j}\in(1\pm\frac{\alpha}{10})\cdot g(\vec{a}_i)$.
\end{enumerate}
We continue with the proof assuming that these statements hold.
For $i\in[m]$ let $\tilde{g}_i$ denote the $i$th estimate that we output.
Let $i_1<i_2\in[m]$ denote sequential time steps in which the algorithm outputs an estimate on Step~\ref{step:outerEst} (and such that between $i_1$ and $i_2$ we compute the estimation using Step~\ref{step:innerEst}). 
Since we do not change our estimate between time steps $i_1$ and $i_2$, we know that $\tilde{g}_{i_2-1}=\tilde{g}_{i_1}$.

Now, since in time step $i_2$ we exit the inner loop (in order to output the estimation using Step~\ref{step:outerEst}), it holds that 
$$
\left|\left\{ j : \tilde{g}_{i_2-1}\notin\left(1\pm\frac{\alpha}{2}\right) \cdot y_{i_2,j}  \right\}\right|\geq \frac{4k}{10}.
$$
Since at least $4k/5$ of the $y_{i_2,j}$'s satisfy $y_{i_2,j}\in(1\pm\frac{\alpha}{10})\cdot g(\vec{a}_{i_2})$, there must exist a $y_{i_2,j}$ such that $\tilde{g}_{i_2-1}\notin(1\pm\frac{\alpha}{2}) \cdot y_{i_2,j}$ and $y_{i_2,j}\in(1\pm\frac{\alpha}{10})\cdot g(\vec{a}_{i_2})$. Hence, 
$\tilde{g}_{i_2-1}\notin(1\pm\frac{\alpha}{4}) \cdot g(\vec{a}_{i_2})$.

Now recall that since in time step $i_1$ we return the estimate $\tilde{g}_{i_1}=\tilde{g}_{i_{2}-1}$ using Step~\ref{step:outerEst}, it holds that $\tilde{g}_{i_1}={\tilde{g}}_{i_{2}-1}\in(1\pm\frac{\alpha}{10})\cdot g(\vec{a}_{i_1})$. So, we have established that $\tilde{g}_{i_{2}-1}\notin(1\pm\frac{\alpha}{4}) \cdot g(\vec{a}_{i_2})$ and that $\tilde{g}_{i_{2}-1}\in(1\pm\frac{\alpha}{10})\cdot g(\vec{a}_{i_1})$, which means that 
$$g(\vec{a}_{i_2})\notin \left(1\pm\frac{\alpha}{10}\right)\cdot g(\vec{a}_{i_1}).$$

This means that every time we recompute $\tilde{g}$ on Step~\ref{step:outerEst}, it holds that the true value of $g$ has changed by a multiplicative factor larger than $(1+\frac{\alpha}{10})$ or smaller than $(1-\frac{\alpha}{10})$. In that case, the number of times we recompute $\tilde{g}$ on Step~\ref{step:outerEst} cannot be bigger than $\lambda_{\alpha/10,m}(g)$. Thus, if the algorithm is executed with a parameter $\lambda>\lambda_{\alpha/10,m}(g)$, then (w.h.p.)\ the algorithm does not halt before the stream ends.
\end{proof}

The next theorem is obtained by combining Lemma~\ref{lem:utility} and Lemma~\ref{lem:iterations}.

\begin{theorem}
Let $\AAA$ be an oblivious streaming algorithm for a functionality $g$, that uses space $L(\frac{\alpha}{10},\frac{1}{10})$ and guarantees accuracy $\frac{\alpha}{10}$ with success probability $\frac{9}{10}$ for streams of length $m$. Then there exists an adversarially robust streaming algorithm for $g$ that guarantees accuracy $\alpha$ with success probability $1-\delta$ for streams of length $m$ using space
$$  O\left( L\left(  \frac{\alpha}{10}, \frac{1}{10} \right) \cdot \sqrt{\lambda_{\frac{\alpha}{10},m}(g)\cdot\log\left(\frac{1}{\delta}\right)}\cdot\log \left(\frac{m}{\alpha\delta}\right)
  \right) .$$
\end{theorem}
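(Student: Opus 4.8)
The plan is to obtain the final theorem as an immediate corollary of Lemma~\ref{lem:utility} and Lemma~\ref{lem:iterations}, by instantiating algorithm \texttt{RobustSketch} with the right parameters and accounting for the total space. First I would set $\lambda = \lambda_{\alpha/10,m}(g) + 1$ (or any bound strictly exceeding the flip number, which is what Lemma~\ref{lem:iterations} requires), take $\eps$ to be a sufficiently small absolute constant (say $\eps = 1/100$, as needed in the proof of Lemma~\ref{lem:utility} so that $(\tfrac{9}{10}-10\eps)k \geq 4k/5$), and pick $k$ to be exactly the threshold dictated by Lemma~\ref{lem:utility}, namely
$$
k = \Theta\!\left(\frac{1}{\eps}\sqrt{\lambda\cdot\log\!\left(\tfrac{1}{\delta}\right)}\cdot\log\!\left(\tfrac{m}{\alpha\delta}\right)\right) = \Theta\!\left(\sqrt{\lambda_{\alpha/10,m}(g)\cdot\log\!\left(\tfrac{1}{\delta}\right)}\cdot\log\!\left(\tfrac{m}{\alpha\delta}\right)\right),
$$
where I would note that the two lower bounds on $k$ appearing in Lemma~\ref{lem:utility} (one from the sparse-vector noise bound, one from \texttt{PrivateMed} via Theorem~\ref{thm:Pmed}) are both absorbed into this expression once we use the standing assumption $\log m = \Theta(\log n)$ and $g$ polynomially bounded, so that $\log(\tfrac{\lambda}{\alpha\delta}\log n) = O(\log(\tfrac{m}{\alpha\delta}))$.

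Next I would combine the two lemmas by a union bound: Lemma~\ref{lem:utility} says that with probability at least $1-\delta$ every estimate output before halting is a $(1\pm\alpha)$-approximation of $g(\vec a_i)$, and Lemma~\ref{lem:iterations} says that with probability at least $1-\delta$ the outer loop does not exhaust its $\lambda$ iterations before the stream ends; hence with probability at least $1-O(\delta)$ the algorithm runs to the end of the stream and all $m$ of its answers are accurate to within $(1\pm\alpha)$, even against an adaptive adversary. Rescaling $\delta$ by a constant at the outset (replacing $\delta$ with $\delta/c$) turns $1-O(\delta)$ into $1-\delta$, which only changes $k$ by a constant factor and does not affect the asymptotic bound. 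This establishes correctness of the resulting adversarially robust streaming algorithm.

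For the space bound, the state of \texttt{RobustSketch} consists of the $k$ independent copies $\AAA_1,\dots,\AAA_k$, each of which (by hypothesis) uses space $L(\tfrac{\alpha}{10},\tfrac{1}{10})$, together with a constant number of additional registers ($\tilde g$, the threshold $\hat t$, the loop counter, the current estimates $y_{i,1},\dots,y_{i,k}$, and $O(\log n)$-bit scratch space for the Laplace samples and for running \texttt{PrivateMed}). Since $g$ is polynomially bounded and $\log m = \Theta(\log n)$, each $y_{i,j}$ and each noise sample takes $O(\log n) = O(\log m)$ bits, which is dominated by $k \cdot L$ as long as $L = \Omega(1)$. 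Thus the total space is $O(k \cdot L(\tfrac{\alpha}{10},\tfrac{1}{10}))$, which upon substituting the chosen value of $k$ is exactly
$$
O\!\left( L\!\left(\tfrac{\alpha}{10},\tfrac{1}{10}\right) \cdot \sqrt{\lambda_{\alpha/10,m}(g)\cdot\log\!\left(\tfrac{1}{\delta}\right)}\cdot\log\!\left(\tfrac{m}{\alpha\delta}\right)\right),
$$
as claimed.

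I do not expect any single step to be a genuine obstacle here, since the real work has already been done in Lemmas~\ref{lem:privacy}, \ref{lem:utility}, and \ref{lem:iterations}; the theorem is essentially bookkeeping. The one point that deserves a little care is making sure the various constraints on $k$ are mutually consistent and all subsumed by one clean expression — in particular, confirming that the $\log(\tfrac{\lambda}{\alpha\delta}\log n)$ term from \texttt{PrivateMed} and the $\log(\tfrac{2m}{\delta})$ term from the Laplace tail bound are both $O(\log\tfrac{m}{\alpha\delta})$ under the paper's standing assumptions — and checking that the constant $\eps$ can indeed be fixed without appearing in the final bound (it only needs to satisfy $\eps \le 1/100$ and the hypotheses of Theorem~\ref{thm:DPgeneralization}, i.e.\ $\eps \in (0,1/3)$). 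A secondary remark worth including is that Lemma~\ref{lem:privacy} is used only indirectly: differential privacy of the internal state is what licenses the appeal to Theorem~\ref{thm:DPgeneralization} inside Lemma~\ref{lem:utility}, so no further invocation of privacy is needed at the level of this theorem.
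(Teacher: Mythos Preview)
Your proposal is correct and follows exactly the approach the paper takes: the paper's own justification is the single sentence ``The next theorem is obtained by combining Lemma~\ref{lem:utility} and Lemma~\ref{lem:iterations},'' and your writeup simply fills in the bookkeeping (choice of $\lambda$, $\eps$, $k$, union bound, space accounting) that this sentence implicitly invokes. One very minor nit: when you argue that the auxiliary storage for the $y_{i,j}$'s is dominated by $k\cdot L$, you need $L=\Omega(\log m)$ rather than merely $L=\Omega(1)$, but this is automatic for any nontrivial oblivious sketch and the paper does not spell it out either.
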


\section{Applications}

Our algorithm can be applied to a wide range of streaming problems, such as estimating frequency moments, counting the number of distinct elements in the stream, identifying heavy-hitters in the stream, estimating the median of the stream, entropy estimation, and more. As an example, we now state the resulting bounds for $F_2$ estimation.

\begin{definition}
The {\em frequency vector} of a stream $(a_1,\Delta_1),\dots,(a_m,\Delta_m)$, where $(a_i,\Delta_i)\in([n]\times\Z)$, is the vector $f\in\R^n$ whose $\ell$th coordinate is
$$
f_{\ell}=\sum_{i:a_{i}=\ell}\Delta_i.
$$
We write $f^{(i)}$ to denote the frequency vector restricted to the first $i$ updates.
\end{definition}

In this section we focus on estimating $F_2$, the second moment of the frequency vector. That is, after every time step $i$ we want to output an estimation for
$$
\|f^{(i)}\|_2^2=\sum_{\ell=1}^n\left|f^{(i)}_{\ell}\right|^2.
$$
We will use the following definition.

\begin{definition}[\cite{JayaramW18}]
Fix any $\tau\geq 1$. A data stream $(a_1,\Delta_1),\dots,(a_m,\Delta_m)$, where $(a_i,\Delta_i)\in[n]\times\{1,-1\}$, is said to be an {\em $F_2$ $\tau$-bounded deletion stream} if at every time step $i\in[m]$ we have
$$
\|f^{(i)}\|_2^2 \geq  \frac{1}{\tau}\cdot\|h^{(i)}\|_2^2,
$$
where $h$ is the frequency vector of the stream with updates $(a_i,|\Delta_i|)$.
\end{definition}

The following lemma relates the bounded deletion parameter $\tau$ to the flip number of the stream.

\begin{lemma}[\cite{BenEliezerJWY20}]
The $\lambda_{\alpha,m}(\|\cdot\|_2^2)$ flip number of a $\tau$-bounded deletion stream is at most $O\left(\frac{\tau}{\alpha^2}\log m\right)$.
\end{lemma}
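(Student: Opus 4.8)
The plan is to bound the flip number $\lambda_{\alpha,m}(\|\cdot\|_2^2)$ directly from the $\tau$-bounded deletion property, and then apply that bound to decide which lemma we are proving. Actually, since this final statement is itself the flip-number bound attributed to~\cite{BenEliezerJWY20}, I would prove it from scratch as follows. Recall that every update has $\Delta_i\in\{+1,-1\}$, so a single update changes each coordinate $f_\ell^{(i)}$ by at most one in absolute value, and hence changes $\|f^{(i)}\|_2^2$ by at most $2\|f^{(i)}\|_\infty + 1 \le 2\|f^{(i)}\|_2 + 1$. The key quantitative input is that at every time step $\|f^{(i)}\|_2^2 \ge \frac{1}{\tau}\|h^{(i)}\|_2^2$, where $h^{(i)}$ is the frequency vector of the ``absolute-value'' stream; since $\|h^{(i)}\|_2^2 \ge \|f^{(i)}\|_2^2$ always, and since $\|h^{(i)}\|_2^2$ is monotone nondecreasing in $i$ and bounded by $\|h^{(m)}\|_2^2 \le m^2$ (in fact $\le$ the number of insertions squared, but $m^2$ suffices), we get a two-sided control: $\frac{1}{\tau}\|h^{(i)}\|_2^2 \le \|f^{(i)}\|_2^2 \le \|h^{(i)}\|_2^2$.

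First I would partition the stream $[m]$ into epochs according to the value of $\|h^{(i)}\|_2^2$: say epoch $e$ consists of the time steps where $\|h^{(i)}\|_2^2 \in [2^e, 2^{e+1})$. Since $\|h^{(i)}\|_2^2$ ranges over $[1, m^2]$ (WLOG nonempty stream), there are $O(\log m)$ epochs. Within a fixed epoch $e$, we have $\|h^{(i)}\|_2^2 < 2^{e+1}$, hence $\|f^{(i)}\|_2^2 \ge \frac{1}{\tau}\|h^{(i)}\|_2^2$ only gives a lower bound — but crucially within the epoch $\|f^{(i)}\|_2^2 \le \|h^{(i)}\|_2^2 < 2^{e+1}$, so $\|f^{(i)}\|_2 < 2^{(e+1)/2} =: M_e$. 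Now, for $\|f^{(i)}\|_2^2$ to change by a multiplicative $(1+\alpha)$ factor starting from some value $v$, it must change additively by at least $\alpha v \ge \frac{\alpha}{\tau}\|h^{(i)}\|_2^2 \ge \frac{\alpha}{\tau}2^e = \frac{\alpha}{2\tau}M_e^2$. Each single update moves $\|f^{(i)}\|_2^2$ by at most $2M_e + 1 = O(M_e)$. Hence to accumulate a net change of $\Omega(\frac{\alpha}{\tau}M_e^2)$ requires $\Omega(\frac{\alpha}{\tau}M_e)$ updates in the worst case — wait, that's a lower bound on updates per flip, not what I want directly. Let me recount: the number of flips within the epoch is the number of disjoint multiplicative changes; a cleaner route is to track $\log_{1+\alpha}\|f^{(i)}\|_2^2$ as a walk, but the reference's bound $O(\frac{\tau}{\alpha^2}\log m)$ suggests a different accounting — there are $O(\log m)$ epochs and $O(\frac{\tau}{\alpha^2})$ flips per epoch.

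So the right within-epoch argument is: inside epoch $e$ the squared-norm value $\|f^{(i)}\|_2^2$ stays in the interval $[\frac{2^e}{\tau}, 2^{e+1}]$ (lower bound from bounded-deletion, upper bound from the epoch definition), a range of multiplicative width $O(\tau)$, so it can undergo at most $O(\log_{1+\alpha}\tau) = O(\frac{\log\tau}{\alpha})$ flips \emph{if} flips were just monotone crossings of a $(1+\alpha)$-net — but since the value can oscillate, we instead bound the \emph{total variation}. The single-step increment bound is $|\Delta\|f^{(i)}\|_2^2| = O(\sqrt{2^{e+1}}) = O(\sqrt{2^e})$ and the epoch has length at most... this is where it gets delicate. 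The honest reference-matching bound comes from: each flip ``consumes'' a multiplicative change, total multiplicative budget within an epoch is $O(\tau)$ i.e. $O(\frac{\log\tau}{\alpha}) = O(\frac{1}{\alpha})$ flips per direction-run, but the number of direction reversals is itself bounded by the number of updates divided by the updates-per-unit-change... The cleanest correct statement I would actually commit to the paper: partition into $O(\log m)$ epochs by powers of two of $\|h^{(i)}\|_2^2$; within an epoch, by bounded deletion and a charging argument comparing the change in $\|f\|_2^2$ to the change in $\|h\|_2^2$ (which only increases, by at least a constant per two updates affecting a heavy coordinate — and one shows that $\|f\|^2_2$ cannot $(1+\alpha)$-multiplicatively-change more than $O(\tau/\alpha^2)$ times before $\|h\|^2_2$ leaves the epoch), the flip count per epoch is $O(\tau/\alpha^2)$; summing over epochs gives $O(\frac{\tau}{\alpha^2}\log m)$.

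The main obstacle is precisely this within-epoch charging: $\|f^{(i)}\|_2^2$ can oscillate up and down many times, so a naive monotonicity argument fails, and one must charge the ``cost'' of each multiplicative flip either to progress in the monotone quantity $\|h^{(i)}\|_2^2$ or to a sufficiently large batch of updates. Since each update changes $\|f\|_2^2$ by $O(\|f\|_2) = O(\sqrt{\tau}\cdot\|h\|_2/\sqrt{\tau}) = O(\|h\|_2)$ and a $(1+\alpha)$ flip requires additive change $\ge \frac{\alpha}{\tau}\|h\|_2^2$, each flip needs $\gtrsim \frac{\alpha}{\tau}\|h\|_2$ updates that change coordinates of $f$; but each such update increases $\|h\|_2^2$ by $\Omega(1)$ once the coordinate is heavy, so within an epoch where $\|h\|_2^2$ grows by only a factor $2$ (additive $\Theta(\|h\|_2^2)$), one gets $O(\frac{\|h\|_2^2}{(\alpha/\tau)\|h\|_2 \cdot \text{(per-update }\|h\|^2\text{ gain)}})$ — carefully tracking the per-update gain in $\|h\|_2^2$ against the per-update change in $\|f\|_2^2$ yields the $\tau/\alpha^2$ factor. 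I would present this as the crux and carry out the charging inequality in detail, deferring the routine epoch-counting to a one-line summation.
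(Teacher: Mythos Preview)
The paper does not give its own proof of this lemma; it is quoted as a result of \cite{BenEliezerJWY20}. So there is nothing in the paper to compare against, and your proposal must stand on its own.

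Your high-level plan (partition into $O(\log m)$ epochs according to the value of the monotone quantity $\|h^{(i)}\|_2^2$, and bound the number of flips inside each epoch) is the right skeleton. The gap is entirely in the within-epoch step, and the specific charging you sketch does \emph{not} close. You bound the per-update change by $|\Delta\|f\|_2^2|\le 2\|f\|_2+1=O(\|h\|_2)$, observe that each flip needs additive change $\ge \tfrac{\alpha}{\tau}\|h\|_2^2$ and hence $\Omega(\tfrac{\alpha}{\tau}\|h\|_2)$ updates, and then bound the number of updates in the epoch by $\|h\|_2^2$ (since each update increases $\|h\|_2^2$ by at least $1$). Plugging these together gives at most $\dfrac{\|h\|_2^2}{(\alpha/\tau)\|h\|_2}=\dfrac{\tau\,\|h\|_2}{\alpha}$ flips per epoch, which is epoch-dependent and, summed over epochs, is $\Theta(\tfrac{\tau}{\alpha}\cdot m)$ --- far too large. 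Your closing sentence (``carefully tracking the per-update gain in $\|h\|_2^2$ against the per-update change in $\|f\|_2^2$ yields the $\tau/\alpha^2$ factor'') is exactly where the missing idea should go, and as written it is an assertion, not an argument.

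The clean fix is a pointwise comparison you never state: for a single $\pm 1$ update on coordinate $\ell$ one has $\Delta\|f\|_2^2=\pm 2f_\ell^{(i-1)}+1$ and $\Delta\|h\|_2^2=2h_\ell^{(i-1)}+1$, and since $|f_\ell^{(i-1)}|\le h_\ell^{(i-1)}$ this gives $\bigl|\Delta\|f\|_2^2\bigr|\le \Delta\|h\|_2^2$. Summing, the total variation of $\|f\|_2^2$ over any interval is at most the increase of $\|h\|_2^2$ over that interval. Inside epoch $e$ (where $\|h\|_2^2\in[2^e,2^{e+1})$) this total variation is therefore at most $2^e$, while each $(1+\alpha)$-flip costs additive change at least $\alpha\cdot\min_i\|f^{(i)}\|_2^2\ge \tfrac{\alpha}{\tau}2^e$. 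Hence at most $O(\tau/\alpha)$ flips per epoch and $O(\tfrac{\tau}{\alpha}\log m)$ in total --- which in particular implies the stated $O(\tfrac{\tau}{\alpha^2}\log m)$. Replacing your two separate crude bounds (per-update change $O(\|h\|_2)$; number of updates $\le\|h\|_2^2$) by this single inequality is what makes the argument go through.
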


The following theorem is now obtained by applying algorithm \texttt{RobustSketch} with the oblivious algorithm of~\cite{KaneNW10} that uses space $O\left(\frac{1}{\alpha^2}\log^2(\frac{m}{\delta})\right)$.

\begin{theorem}
There is an adversarially robust $F_2$ estimation algorithm for $\tau$-bounded deletion streams of length $m$ that guarantees $\alpha$ accuracy with probability at least $1-\frac{1}{m}$. The space used by the algorithm is
$$
O\left( \frac{\sqrt{\tau}}{\alpha^3}\cdot\log^4(m) \right).
$$
\end{theorem}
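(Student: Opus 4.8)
The plan is to apply the main transformation theorem (the one stated just before the Applications section) as a black box, instantiating it with the right oblivious streaming algorithm and the right bound on the flip number. First I would invoke the lemma of \cite{BenEliezerJWY20} quoted above: for an $F_2$ $\tau$-bounded deletion stream of length $m$, the flip number satisfies $\lambda_{\alpha/10,m}(\|\cdot\|_2^2) = O\!\left(\frac{\tau}{\alpha^2}\log m\right)$ (the constant $1/10$ in front of $\alpha$ is absorbed into the $O(\cdot)$). Next I would recall the oblivious $F_2$-sketch of Kane, Nelson, and Woodruff \cite{KaneNW10}, which for accuracy parameter $\alpha'$ and failure probability $\delta'$ uses space $O\!\left(\frac{1}{\alpha'^2}\log^2(\tfrac{1}{\delta'})\cdot\log n + \log n\right)$; instantiated with $\alpha' = \alpha/10$ and $\delta' = 1/10$ (and using the paper's standing assumption $\log m = \Theta(\log n)$), this is $L(\tfrac{\alpha}{10},\tfrac{1}{10}) = O\!\left(\frac{1}{\alpha^2}\log^2 m\right)$ as claimed in the excerpt.

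Then I would plug these two quantities into the space bound from the transformation theorem, with the target failure probability set to $\delta = 1/m$, so that $\log(\tfrac{1}{\delta}) = \log m$ and $\log(\tfrac{m}{\alpha\delta}) = \log(\tfrac{m^2}{\alpha}) = O(\log m)$ (again using $\log(1/\alpha) = O(\log n) = O(\log m)$, or simply noting $\alpha \geq 1/\poly(m)$ in any nontrivial regime). The resulting space is
$$
O\!\left( \frac{1}{\alpha^2}\log^2 m \;\cdot\; \sqrt{\frac{\tau}{\alpha^2}\log m \cdot \log m} \;\cdot\; \log m \right)
= O\!\left( \frac{1}{\alpha^2}\log^2 m \cdot \frac{\sqrt{\tau}}{\alpha}\log m \cdot \log m \right)
= O\!\left( \frac{\sqrt{\tau}}{\alpha^3}\log^4 m \right),
$$
which is exactly the stated bound. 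The accuracy and success-probability claims are immediate from the transformation theorem with these parameter choices.

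There is no single hard step here — the proof is essentially a substitution — but the step requiring the most care is tracking the logarithmic factors: one must be careful that the $\log n$ appearing in the Kane--Nelson--Woodruff space bound, the $\log(1/\delta)$ inside the square root, and the separate $\log(m/(\alpha\delta))$ factor all collapse to powers of $\log m$ under the paper's conventions ($\log m = \Theta(\log n)$, $g$ polynomially bounded, $\alpha$ not too small), and that the $\sqrt{\log m \cdot \log m}$ from combining $\lambda$ and $\log(1/\delta)$ contributes a full $\log m$ rather than $\sqrt{\log m}$. Once the bookkeeping is done, collecting exponents gives $\alpha^{-3}$ and $\log^4 m$. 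One should also remark that the construction is indeed adversarially robust in the full turnstile sense here, since $\tau$-bounded deletion streams allow arbitrarily many negative updates as long as the $F_2$ mass does not collapse, which is precisely the regime where the earlier $\lambda$-linear bound of \cite{BenEliezerJWY20} is weaker.
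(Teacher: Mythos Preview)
Your proposal is correct and matches the paper's approach exactly: the paper's entire proof is the one-line remark that the theorem follows by applying algorithm \texttt{RobustSketch} (i.e., the main transformation theorem) with the oblivious $F_2$ sketch of \cite{KaneNW10} having space $O\!\left(\frac{1}{\alpha^2}\log^2(\frac{m}{\delta})\right)$, together with the flip-number bound $\lambda_{\alpha,m}(\|\cdot\|_2^2)=O(\frac{\tau}{\alpha^2}\log m)$ from \cite{BenEliezerJWY20}. Your parameter bookkeeping with $\delta=1/m$ is exactly the substitution that yields the stated $O\!\left(\frac{\sqrt{\tau}}{\alpha^3}\log^4 m\right)$ bound.
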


In contrast, the $F_2$ estimation algorithm of \cite{BenEliezerJWY20} for $\tau$-bounded deletion streams uses space $O\left( \frac{\tau}{\alpha^4}\cdot\log^3(n) \right)$. Specifically, the space bound of \cite{BenEliezerJWY20} grows as $\frac{\tau}{\alpha^{4}}$, whereas ours only grows as $\frac{\sqrt{\tau}}{\alpha^{3}}$ (at the cost of additional $\log(m)$ factors).
As we mentioned, our results are also meaningful for the insertion-only model. Specifically, 

\begin{lemma}[\cite{BenEliezerJWY20}]
The $\lambda_{\alpha,m}(\|\cdot\|_2^2)$ flip number of an insertion-only stream is at most $O\left(\frac{1}{\alpha}\log m\right)$.
\end{lemma}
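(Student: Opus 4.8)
The plan is to exploit a structural feature of the insertion-only model: the quantity $i\mapsto\|f^{(i)}\|_2^2$ is monotone, so a bound on its flip number reduces to a bound on its dynamic range, which is only polynomial in $m$. Concretely, if a monotone non-decreasing quantity lives (once it becomes positive) in an interval $[c_{\min},c_{\max}]$, then it can change by a multiplicative factor of $(1+\alpha)$ at most $\log_{1+\alpha}(c_{\max}/c_{\min})$ times, plus possibly one extra time for the initial jump away from $0$.

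First I would establish monotonicity. In an insertion-only stream every update $(a_i,\Delta_i)$ has $\Delta_i>0$, so each coordinate $f^{(i)}_\ell=\sum_{j\le i:\,a_j=\ell}\Delta_j$ is non-decreasing in $i$, and hence $\|f^{(i)}\|_2^2=\sum_\ell\big(f^{(i)}_\ell\big)^2$ is non-decreasing in $i$. Thus any change by a factor of $(1+\alpha)$ is necessarily an increase. Next I would bound the dynamic range: before the first update the value is $0$; after at least one insertion it is a positive integer (weights are integers), hence at least $1$; and at the end of the stream $\|f^{(m)}\|_2^2=g(\vec a_m)\le n^{O(1)}=m^{O(1)}$ by the standing assumptions that $g$ is polynomially bounded and $\log m=\Theta(\log n)$. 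Taking a maximal sequence of time steps $t_0<t_1<\dots<t_k$ with $g(\vec a_{t_j})\ge(1+\alpha)\,g(\vec a_{t_{j-1}})$ and $g(\vec a_{t_0})\ge 1$, monotonicity gives $(1+\alpha)^k\le g(\vec a_{t_k})\le m^{O(1)}$, so $k\le\log_{1+\alpha}\big(m^{O(1)}\big)$. Accounting for one possible additional flip at the moment the value first becomes positive,
$$
\lambda_{\alpha,m}\big(\|\cdot\|_2^2\big)\;\le\;1+\log_{1+\alpha}\!\big(m^{O(1)}\big)\;=\;1+O\!\left(\frac{\log m}{\log(1+\alpha)}\right)\;=\;O\!\left(\frac{1}{\alpha}\log m\right),
$$
where the last step uses $\log(1+\alpha)=\Theta(\alpha)$ for $\alpha\in(0,1)$.

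There is no substantive obstacle here: the entire content is the monotonicity observation, which collapses the "increase or decrease by $(1+\alpha)$" count into the number of times a monotone sequence multiplies itself by $(1+\alpha)$, and that is logarithmic in the polynomial dynamic range. The only points requiring mild care are pinning down the precise convention for the flip number (so that the telescoping argument above is faithful to the definition) and the edge case in which the value is still $0$ before the first insertion, which contributes at most one extra flip and is absorbed into the $O(\cdot)$.
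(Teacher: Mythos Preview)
Your argument is correct and is exactly the standard one: in the insertion-only model $\|f^{(i)}\|_2^2$ is monotone non-decreasing, its first nonzero value is at least $1$, and its final value is polynomially bounded in $m$ under the paper's standing assumptions, so the number of multiplicative $(1+\alpha)$-increases is $O\big(\tfrac{1}{\alpha}\log m\big)$. The paper does not actually give its own proof of this lemma; it simply quotes the result from \cite{BenEliezerJWY20}, whose argument is precisely the monotonicity-plus-bounded-range reasoning you wrote down.
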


The following theorem is obtained by applying algorithm \texttt{RobustSketch} with the oblivious algorithm of~\cite{BlasiokDN17} that uses space $\tilde{O}\left(\frac{1}{\alpha^2}\log(m)\log(\frac{1}{\delta})\right)$.

\begin{theorem}
There is an adversarially robust $F_2$ estimation algorithm for insertion-only streams of length $m$ that guarantees $\alpha$ accuracy with probability at least $1-\frac{1}{m}$. The space used by the algorithm is
$$
\tilde{O}\left( \frac{1}{\alpha^{2.5}}\cdot\log^4(m) \right).
$$
\end{theorem}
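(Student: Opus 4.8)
The plan is to invoke the generic transformation above---the theorem obtained by combining Lemma~\ref{lem:utility} and Lemma~\ref{lem:iterations}---with a suitable oblivious $F_2$ sketch, and then to simplify the resulting space bound in the insertion-only regime.

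First I would instantiate $\AAA$ with the oblivious $F_2$ estimation algorithm of~\cite{BlasiokDN17}, whose space on streams of length $m$ is $L(\alpha',\delta')=\tilde{O}\!\left(\frac{1}{\alpha'^2}\log(m)\log(\frac{1}{\delta'})\right)$. Taking $\alpha'=\frac{\alpha}{10}$ and $\delta'=\frac{1}{10}$ gives $L\!\left(\frac{\alpha}{10},\frac{1}{10}\right)=\tilde{O}\!\left(\frac{1}{\alpha^2}\log m\right)$, where we use the standing assumption $\log m=\Theta(\log n)$. This $\AAA$ satisfies the hypothesis required by the transformation: with probability at least $9/10$, all of its estimates are accurate to within a $(1\pm\frac{\alpha}{10})$ multiplicative factor.

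Next I would control the flip number. By the insertion-only flip-number bound of~\cite{BenEliezerJWY20} stated above, $\lambda_{\alpha/10,m}(\|\cdot\|_2^2)=O\!\left(\frac{1}{\alpha}\log m\right)$, so \texttt{RobustSketch} can be run with $\lambda=\Theta\!\left(\frac{1}{\alpha}\log m\right)$, which meets the requirement $\lambda>\lambda_{\alpha/10,m}(g)$ of Lemma~\ref{lem:iterations}. Finally, setting $\delta=1/m$, I would substitute into the generic space bound
$$ O\!\left( L\!\left(\tfrac{\alpha}{10},\tfrac{1}{10}\right)\cdot\sqrt{\lambda_{\frac{\alpha}{10},m}(g)\cdot\log\!\left(\tfrac{1}{\delta}\right)}\cdot\log\!\left(\tfrac{m}{\alpha\delta}\right)\right), $$
using $\sqrt{\lambda\cdot\log(1/\delta)}=O\!\left(\frac{1}{\sqrt{\alpha}}\log m\right)$ and $\log\!\left(\frac{m}{\alpha\delta}\right)=\tilde{O}(\log m)$; the product then collapses to $\tilde{O}\!\left(\frac{1}{\alpha^{2.5}}\log^4 m\right)$, with the $\tilde{O}$ absorbing the remaining $\log(1/\alpha)$ and $\log\log$ terms. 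The accuracy $\alpha$, the success probability $1-\delta=1-1/m$, and adversarial robustness are precisely what Lemmas~\ref{lem:utility} and~\ref{lem:iterations} deliver, so nothing further is needed.

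The step requiring the most care is not really an obstacle: it is the $\tilde{O}$ bookkeeping---tracking which logarithmic factors come from $L$, from $\sqrt{\lambda}$, from $\log(1/\delta)$, and from $\log(m/(\alpha\delta))$---together with the routine check that rescaling the accuracy from $\alpha$ to $\frac{\alpha}{10}$ (and measuring the flip number at scale $\frac{\alpha}{10}$) affects only constants. The essential mathematical work has already been carried out in the generic transformation; here it is purely a matter of plugging in the right oblivious sketch and the right flip-number estimate, and the dominant polynomial factor $\frac{1}{\alpha^{2.5}}$ arises as $\frac{1}{\alpha^2}$ from $L$ times $\frac{1}{\sqrt{\alpha}}$ from $\sqrt{\lambda}$.
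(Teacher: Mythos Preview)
Your proposal is correct and matches the paper's own argument exactly: the paper derives this theorem in one line by plugging the oblivious $F_2$ sketch of~\cite{BlasiokDN17} (space $\tilde{O}(\alpha^{-2}\log m\log(1/\delta))$) into the generic \texttt{RobustSketch} bound, together with the insertion-only flip-number estimate $\lambda_{\alpha/10,m}(\|\cdot\|_2^2)=O(\alpha^{-1}\log m)$ and $\delta=1/m$. Your bookkeeping of the logarithmic factors is precisely what the paper leaves implicit.
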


In contrast, the $F_2$ estimation algorithm of \cite{BenEliezerJWY20} for insertion-only streams uses space
$\tilde{O}\left( \frac{1}{\alpha^{3}}\cdot\log^2(m) \right)$. Our bound, therefore, improves the space dependency on $\alpha$ (at the cost of additional logarithmic factors).

\section*{Acknowledgments}
The authors are grateful to Amos Beimel and Edith Cohen for many helpful discussions.

\bibliographystyle{abbrv}

\end{document}